\begin{document}

\date{}
 
\author{
  Pierre Sutra, 
  \'Etienne Rivi\`ere, 
  Pascal Felber\\
  University of Neuch\^atel\\
  Switzerland\\
  \{pierre.sutra, etienne.riviere, pascal.felber\}@unine.ch
}

\title{A Practical Distributed Universal Construction \\ with Unknown Participants}

\maketitle

\begin{abstract}
  Modern distributed systems employ atomic read-modify-write primitives to coordinate concurrent operations.
  Such primitives are typically built on top of a central server, or rely on an agreement protocol.
  Both approaches provide a \emph{universal construction}, that is,
  a general mechanism to construct atomic and responsive objects.
  These two techniques are however known to be inherently costly.
  As a consequence, they may result in bottlenecks in applications using them for coordination.
  In this paper, we investigate another direction to implement a universal construction. 
  Our idea is to delegate the implementation of the universal construction to the clients, 
  and solely implement a distributed shared atomic memory at the servers side.
  The construction we propose is obstruction-free.
  It can be implemented in a purely asynchronous manner, 
  and it does not assume the knowledge of the participants.
  It is built on top of \emph{grafarius} and \emph{racing} objects, two novel shared abstractions that we introduce in detail. 
  To assess the benefits of our approach, we present a prototype implementation on top of the Cassandra data store,
  and compare it empirically to the Zookeeper coordination service.
\end{abstract}



\thispagestyle{empty}
\clearpage
\setcounter{page}{1}
\newpage

\section{Introduction}
\labsection{introduction}

The management of conflicting accesses to shared data plays a key role in executing correctly and efficiently distributed applications.
In general, strongly consistent operations on shared data are serialized either through a central server, 
or using the replicated state machine approach (e.g., with the Paxos consensus protocol~\cite{paxos}).
These two techniques implement a wait-free universal construction, that is,
a general mechanism to obtain responsive atomic objects~\cite{waitfree}.
It is however well-established that these two mechanisms are costly.
This comes from the fact that in both cases a server serializes all updates emitted by the clients, creating a potential bottleneck in the system.
Furthermore, central servers require human intervention to be constantly operational,
and replicated state machines are known to be difficult to deploy and maintain.

In this paper, we propose to delegate the logic of strongly consistent operations to the client side, 
and to replace the central server/replicated state machine by a distributed shared memory.
The resulting universal construction is dependable, while being conceptually simpler than state-machine-replication.
Similar in spirit to~\cite{corfu,diskpaxos}, or more recently~\cite{Tango}, we aim at bridging the gap that exists in practice between 
shared memory literature on universal constructions and their counterparts in distributed systems.
Our approach is nonetheless different as we do not rely on a shared log to order all accesses,
but instead make use of a distinct set of registers to implement each object used by the application.
This leverages the intrinsic parallelism of the workload.

To achieve this, our first contribution is an obstruction-free universal construction 
on top of an asynchronous distributed shared memory that works even if the participants are unknown.
We base our construction on two novel abstractions: a \emph{grafarius} and a \emph{racing}.
A grafarius is close to the more common notion of ratifier, or adopt-commit object~\cite{RbR,ratifier}.
A racing object encapsulates the behavior of algorithms that repeatedly access new objects to progress.
By combining these two abstractions, we devise an obstruction-free universal construction 
whose time complexity is optimal during contention-free executions.

Our previous solution makes use of an unbounded amount of memory to store the state of the object it implements.
We solve this problem with a second contribution, in the form of a novel memory management mechanism.
To that end, we first formalize the notion of recycled objects and propose a mechanism to recycle all the base objects of our previous implementation.
In a distributed system, the time complexity of the resulting universal construction during uncontended executions is constant,
and it makes uses of $O(k^2)$ shared registers, where $k$ is the amount of processes that actually access the construction.

Our third contribution is a practical assessment of our approach.
We present a prototype implementation on top of the Cassandra distributed data store~\cite{cassandra}
which we compare to Zookeeper, a state-of-the-art coordination service~\cite{zookeeper}.
Several empirical results show that our system achieves results comparable to Zookeeper when clients rarely contend on shared objects, 
and that in addition, it exhibits a good scalability factor.
For instance, with 12 servers and when the workload is completely parallel, 
our system is as dependable as a 3 servers deployment of Zookeeper,  while being 3.2 times faster.
This last property comes from the fact that our approach exhibits no bottleneck.
Thus, the more it scales-out, the more likely operations that access distinct objects execute in parallel on the servers, improving performance.

We organize the remainder of this paper as follows. 
\refsection{relwork} surveys related work.
In \refsection{base}, we introduce the notions of grafarius and racing objects, and presents our first universal construction.
We refine this construct to bound its memory footprint in \refsection{recycle}.
In \refsection{empirical}, we present a prototype implementation of our algorithm on top of Cassandra, and we evaluate it against Zookeeper.
We close in \refsection{conclusion}.
For the sake of clarity, all the proofs are deferred to the appendix.
 
\section{Related Work}
\labsection{relwork}

Our work deals with the problem of transforming any sequential object into a concurrent strongly-consistent one.
Such a mechanism is named in literature a universal construction.
At core of this construction is consensus, an abstraction with which processes agree on the next state of the concurrent object.
In a distributed system, the classical approach to implement consensus is the Paxos algorithm~\cite{paxos}.
Due to the impossibility result of \citet{con:pan:640}, Paxos is indulgent~\cite{343630}.
This means that Paxos guarantees safety at all time but progress only under favorable circumstances.
The alpha of consensus~\cite{972260} models the indulgent part of Paxos.
This notion is close to the ranked-register object~\cite{Chockler:2002} that captures the behavior of the Disk Paxos algorithm of \citet{diskpaxos}.

Fundamentally, processes executing Paxos iterative call the alpha abstraction with a tuple $(k,v)$, 
where $k$ is a round number and $v$ some (appropriately chosen) proposal value.
Each such call translates  the execution of a round in the original algorithm of \citet{paxos}.
A ratifier, or \emph{adopt-commit}, object~\cite{RbR} is a one-shot shared object that encapsulates the safety property of a round.
Hence, from a high-level perspective, the alpha of consensus can be seen as successive (consistent) calls to adopt-commit objects 
(see~\cite{ratifier} or~\cite[Fig.5]{obstructionFree09}).
Our notion of racing, introduced in \refsection{base:racing}, aims at further abstracting how processes iteratively access such objects.

When there is no assumption on the proposed values, 
the result of \citet{AspnesE2011} tells us that the solo time complexity of an adopt-commit belongs to $\Omega\left(\sqrt{\log n}/\log \log n\right)$.
Surprisingly, some consensus algorithms exhibit constant solo decision time (e.g.,~\cite{uncontendedComplexityConsensus}).
This difference is explained by the convergence property of adopt-commit objects that requires processes to commit a value in case they all propose it.
In \refsection{base:grafarius}, we introduce the notion of grafarius object.
A grafarius can be seen as an adopt-commit object with a weak convergence property, namely a process has to commit its value only if it executes solo.
As shown in \refsection{base:cons}, we can build an obstruction-free consensus with constant solo decision time on top of grafarius and racing objects.

During a step-contention free execution, operations of a process do not contend on the base objects that implement the desired abstraction.
The work of \citet{obstructionFree09} studies obstruction-free solo-fast implementations of shared objects, 
that is implementations which only make use of history-less primitives during step-contention free executions, but rely on stronger ones under contention.
The authors show that such implementation must have $\Omega(n)$ space complexity and $\Omega(\log n)$ time complexity during step-contention free executions.
Some algorithms, such as~\cite{uncontendedComplexityConsensus,obstructionFree09} in shared-memory, 
or~\cite{diskpaxos,Chockler:2002,corfu,Tango} in distributed systems, use strong synchronization primitives to implement consensus.
On the contrary, our approach solely relies on registers that are emulated by the underlying distributed system.
As a consequence of this choice, our universal construction is obstruction-free.
The work of \citet{obstructionFreeIntoWaitFree} describes a practical transformation to convert an obstruction-free algorithm into a wait-free one.
\citet{obstructionFree00} proves an $\Omega(n)$ lower bound on the solo decision time and the space complexity of such implementations.

The time complexity of the wait-free universal construction of \citet{waitfree} belongs to $O(n)$.
\citet{someresults} propose a variation of this construction that does not employs unbounded integers.
The space complexity of this last algorithm is $O(n^2)$.
In \cite{obstructionFree09}, the authors present an obstruction-free universal construction that use an unbounded amount of memory.
Our bounded universal construction presented in \refsection{recycle:univ} works in the case where processes participating to the construction are unknown.
In a distributed system, it makes use of $O(k^2)$ registers, where $k$ is the amount of processes that actually access the construction.
To achieve this, we present a novel recycling mechanism in \refsection{recycle:recycling}.
At core of our mechanism is the observation that properly recycled grafarius objects can be concurrently accessed in different rounds.

\section{The Construction}
\labsection{base}

This section first introduces our system model.
Then, it details the grafarius and the racing objects.
Based on these two abstractions, we further depict a consensus algorithm 
that exhibits a constant time complexity in the contention-free case.
This algorithm is our core building block to obtain an efficient universal construction.
All the objects we present hereafter are depicted in the asynchronous shared-memory model, 
and they all support a bounded yet unknown amount of processes.
These two assumptions reflect the message-passing system we target.

\subsection{System Model \& Notations}
\labsection{base:model}

We consider an asynchronous message-passing system characterized by a complete communication graph
where both communication and computation are asynchronous.
Processes take their identities from some bounded set $\procSet$, with $n=\cardinalOf{\procSet}$.
The set $\procSet$ is not accessible to processes for computation, 
but they may execute operations on the identities (e.g., equality tests).

During an execution, a process can fail-stop by crashing, but we assume that at most $\partieEntieree{\frac{n}{2}}-1$ such failures occur.
There exists an implementation of an asynchronous shared-memory (\asm) under such an assumption~\cite{ABD,Lynch:1997}.
Consequently, we shall write all our algorithms in the \asm model 
where processes communicate by reading and writing to atomic multi-writer multi-reader (MWMR) registers.

In what follows, we detail how to implement higher level abstractions using the shared registers.
Most of the objects we describe in this paper are linearizable~\cite{loo:syn:1468}.
An object is \emph{one-shot} when a process may call one of its operations at most once.
When there is no limit to the number of times a process may invoke the object's operations, the object is \emph{long lived}.
Besides, we shall be considering the following two progress conditions 
on the invocations and responses of operations~\cite{onTheNatureOfProgress}:%
\begin{inparaenum}[]
\item[(\emph{Obstruction-freedom})] if at some point in time a process runs solo then eventually it returns from the invocation; and 
\item (\emph{Wait-freedom}) a process returns from the invocation after a bounded number of steps.
\end{inparaenum}

In this paper, we are most interested in executions where processes rarely contend on shared objects.
The canonical case of such an execution is the \emph{solo} execution in which a single process executes computational steps.
This class of execution is appropriate for one-shot objects but we need extending it for long-lived ones.
To that end, we define the notion of \emph{contention-free} execution that is an execution during which calls to the implemented shared object do not interleave.
The \emph{contention-free time complexity} of an algorithm is the worth case number of steps made by a process during such executions.

\subsection{Grafarius}
\labsection{base:grafarius}

\begin{algorithm}[t]

  \caption{Grafarius -- code at process $p$}
  \labalg{grafarius}
  \footnotesize

  \begin{algorithmic}[1]

    \begin{svariables}
      \VarCustom{s} \Comment{A splitter object}
      \VarNoType{c}{$\false$}
      \VarNoType{d}{$\bot$}
    \end{svariables}

    \begin{procedure}{\adoptcommit{u}}

      \If{$\neg~s.\ssplit$} \labline{grafarius:1}
        \State $c \assign \true$ \labline{grafarius:2}
        \If{$d \neq \bottom$} \labline{grafarius:3}
          \Return $(\flagAdopt,d)$ \labline{grafarius:4}
        \EndIf
        \State $d \assign u$ \labline{grafarius:5}
        \Return $(\flagAdopt,u)$ \labline{grafarius:6}        
      \EndIf

      \State $d \assign u$ \labline{grafarius:7}
      \If{$c$} \labline{grafarius:8}
        \Return $(\flagAdopt,u)$ \labline{grafarius:9}
      \EndIf
      \Return $(\flagCommit,u)$ \labline{grafarius:10}

    \end{procedure}
    
  \end{algorithmic}

\end{algorithm}

The first abstraction we employ to build our universal construction is a shared object named a grafarius.
A \emph{grafarius} is a one-shot object defined on a domain of values $\mathcal{V}$.
It exports a single operation $\adoptcommit{u \in \mathcal{V}}$ that returns a pair $(\mathit{flag},v)$, 
with $\mathit{flag} \in \{ \flagAdopt, \flagCommit \}$ and $v \in \mathcal{V}$.
During every history of a grafarius and for every process $p$ that invokes \adoptcommit{u}, the following properties are satisfied:%
\begin{inparaenum}[()] 
\item[(\emph{Validity})] %
  If $p$ adopts $v$, some process invoked the operation \adoptcommit{v} before.
\item[(\emph{Coherence})] %
  If $p$ commits $v$, every process either adopts or commits $v$.
\item[(\emph{Solo Convergence})] %
  If $p$ returns before any other process invokes \adoptcommitOperation, $p$ commits $u$.
\item[(\emph{Continuation})]
  If some process returns before $p$ invokes \adoptcommitOperation,
  $p$ adopts or commits a value proposed before it invokes \adoptcommitOperation.
\end{inparaenum}

Our notion of grafarius object is closely related to the principle of adopt-commit object introduced by \citet{RbR}.
Nevertheless, the two shared abstractions are not comparable.
On the one hand, the solo convergence property of a grafarius is weaker than the convergence property of an adopt-commit object.
This avoids the lower bound $\Omega\left(\sqrt{\log n}/\log \log n\right)$ 
on the time complexity to execute \adoptcommitOperation in \asm~\cite{AspnesE2011},
while being sufficient to implement obstruction-free consensus.
On the other hand, an adopt-commit object does not satisfy the continuation property,
meaning that a process can return a value $(u,\flagAdopt)$ despite the fact that its invocation follows a call that returned $(v,\flagAdopt)$.
The continuation property improves convergence speed under contention.
This also makes the grafarius a decidable object, which is needed by our memory management schema.
We give additional details regarding this last point in \refsection{recycle}.

\refalg{grafarius} depicts a wait-free implementation of a grafarius.
This algorithm makes use of a splitter object that detects a collision when two processes concurrently access the shared object.
We first remind below how a splitter works, then we detail the internals of \refalg{grafarius}.

The principle of a splitter was introduced by \citet{fastmutualexclusion}, and later formalized by \citet{splitter}.
A splitter is a one-shot shared object that exposes a single operation: \ssplit.
This operation takes no parameter and returns a value in $\{\true, \false\}$.%
\footnote{
  A splitter is generally defined with returned values $\{L,S,R\}$.
  Here, we make no distinction between $L$ and $R$.
}
When a process returns $\true$, we shall say that it \emph{wins} the splitter;
otherwise it \emph{loses} the splitter.
When multiple processes call \ssplit, at most one receives the value $\true$, and if a single process calls \ssplit, this call returns $\true$.
Furthermore, when a process calls \ssplit after some other process returned, it necessarily loses the splitter.
A splitter is implementable in a wait-free manner with atomic registers.

\refalg{grafarius} works as follows.
Upon a call to \adoptcommit{u}, a process $p$ tries to win the splitter (\refline{grafarius:1}).
If $p$ fails, it raises the flag $c$ to record that a collision occurred, i.e.,
the fact that two processes concurrently attempted to commit a value.
Then, in case a decision was recorded, $p$ adopts it; otherwise $p$ adopts it own value (\reflines{grafarius:3}{grafarius:6}).
On the other hand, if $p$ wins the splitter, it writes its proposal $u$ to the register $d$.
In this case, $p$ commits $u$ if it detects no conflict, otherwise $p$ adopts it (\reflines{grafarius:7}{grafarius:9}).

\subsection{Notion of Racing}
\labsection{base:racing}

\begin{algorithm}[t]

  \caption{Racing on $\lapSet$ -- code at process $p$}
  \labalg{uracing} 
  \footnotesize

  \begin{algorithmic}[1]

    \begin{svariables}
      \VarCustom{$L$} \Comment{A map from $\procSet$ to $\naturalSet$, initially $\emptySet$}
    \end{svariables}

    \begin{lvariables}
      \VarCustom{$\ifunct$} \Comment{A function from $\naturalSet$ to $\lapSet$}
      \VarNoType{\lastlap}{0}
    \end{lvariables}

    \begin{procedure}{\enter{}}
      \State $L[p] \assign \lastlap$ \labline{uracing:1} 
      \State $S \assign \codomainOf{L}$ \labline{uracing:2} 
      \State \textbf{let} $m = \maxOf{S}$ \labline{uracing:3}
      \If{$\lastlap=m$} \labline{uracing:4}
        \State $\lastlap \assign m+1$ \labline{uracing:5}
      \Else
        \State $\lastlap \assign m$ \labline{uracing:6}
      \EndIf
      \Return $\ifunctOf{\lastlap}$ \labline{uracing:7}
    \end{procedure}
    
  \end{algorithmic}

\end{algorithm}


Many algorithms (e.g.,~\cite{corfu,anonymousRuppertGuerraoui}) repeatedly access new objects to progress.
A \emph{racing} is a long-lived object that captures such an iterative pattern.
Its interface consists of a single operation $\enter{p,l}$, defined on a countably infinite domain $\lapSet$ of \emph{laps}. 
During a history $h$, a process $p$ \emph{enters} lap $l$ when \enter{p,l} occurs in $h$.
Process $p$ \emph{leaves} lap $l$ when $l$ is the last lap entered by $p$ and $p$ enters a new one.
The following invariant holds during every history of a racing:%
\begin{inparaenum}[()]
\item[\emph{(Ordering)}]
  There exists a strict total order $\ll_h$ on the set of entered laps in $h$
  such that for every process $p$ that enters some lap $l$, either
  \begin{inparaenum}[(i)]
  \item some process left $l$ before $p$ enters it, or
  \item the last lap left by $p$ is the greatest lap smaller than $l$ for the order $\ll_h$.
  \end{inparaenum}
\end{inparaenum}

Let us consider an unbounded counter $c$ at each process, and an indexing function $\ifunct$ from $\naturalSet$ to $\lapSet$.
Whenever a process $p$ enters a new lap, suppose that $p$ increments $c$ and then returns the object $\ifunctOf{c}$.
This simple local algorithm implements a linearizable racing.
However, because this construction does not bound the amount of laps a process has to retrieve before knowing the most recent one, 
it might be expensive when contention occurs.

\refalg{uracing} presents a more efficient approach that allows a process to skip the laps it missed.
This algorithm makes use of an initially empty shared map $L$ from $\procSet$ to $\naturalSet$.
We map $x$ to the value $y$ via $L$ when writing $L[x] \assign y$, 
and operation $\codomainOf{L}$ returns the codomain of $L$.
For some process $p$, the local variable $\lastlap$ stores the index of the last lap entered by $p$.
When it calls $\enter{}$, process $p$ stores the $\lastlap$ index in $L$ (\refline{uracing:1}).
Then, $p$ retrieves the content of $L$ and computes the maximum element $m$ in its codomain.
Process $p$ assigns $m+1$ to $\lastlap$, if $\lastlap=m$ holds, and $m$ otherwise (\reflines{uracing:4}{uracing:6}).
The value of $\ifunctOf{\lastlap}$ is then returned as the result of the call.

\textbf{Time complexity.}
In \refalg{uracing}, processes never add concurrently the same element to the domain of $L$ 
As a consequence, this variable can be implemented by the adaptive collect object of \citet{collect}.
It follows that the time complexity of \refalg{uracing} is $O(k)$,
where $k \leq n$ denotes the number of processes that actually access the racing object.

\subsection{Racing-based Consensus}
\labsection{base:cons}

Using the racing abstraction introduced in the previous section, we now depict an obstruction-free implementation of consensus.
Recall that consensus is a shared object whose interface consists of a single method \proposeAction.
This method takes as input a value from some set $\mathcal{V}$ and returns a value in $\mathcal{V}$ ensuring both 
(\emph{Validity}) if $v$ is returned then some process invoked \propose{v} previously, and
(\emph{Agreement}) two processes always return the same value.

\begin{algorithm}[t]

  \caption{Consensus -- code at process $p$}
  \labalg{rconsensus}
  \footnotesize

  \begin{algorithmic}[1]

    \begin{svariables}
      \VarCustom{$R$} \Comment{A racing on grafarius objects}
      \VarNoType{$d$}{$\bot$}
    \end{svariables}

    \begin{procedure}{\propose{u}}
      \While{\true} 
        \If{$d \neq \bot$} \labline{rconsensus:1}
          \Return $d$ \labline{rconsensus:2}
        \EndIf
        \State $o \assign R.\enter{}$ \labline{rconsensus:3}
        \State $(f,u) \assign o.\adoptcommit{u}$ \labline{rconsensus:4}
        \If{$f = \flagCommit$} \labline{rconsensus:5}
          \State $d \assign u$ \labline{rconsensus:6}
        \EndIf
      \EndWhile
    \end{procedure}

  \end{algorithmic}

\end{algorithm}



\refalg{rconsensus} describes our implementation of consensus.
In this algorithm, processes compete on two shared abstractions: 
a racing $R$ on grafarius objects, and a decision register $d$.
When a process $p$ suggests a value $u$ for consensus, 
it attempts to commit $u$ by entering the next grafarius object in $R$ (\refline{rconsensus:3}).
Every time $p$ executes \adoptcommitOperation on a grafarius object $o$, 
$p$ updates its proposed value with the value returned by $o$ (\refline{rconsensus:4}).
In case the grafarius returns a committed value, 
this value is stored in $d$ as the result of the call to \proposeAction (\reflinestwo{rconsensus:5}{rconsensus:6}).

\textbf{Time complexity.}
The call to the splitter object in \refalg{grafarius} requires four computational steps \cite{splitter}.
Besides, the solo time complexity of the adaptive collect object of \citet{collect} is $O(1)$.
It follows that \refalg{rconsensus} solves consensus in $O(1)$ steps during solo executions.
This fast resolution of consensus allows us to implement a universal construction with a linear time complexity when no contention occurs.
We detail our approach in the next section.

\subsection{A Fast Obstruction-free Universal Construction}
\labsection{base:universal}

\begin{algorithm}[t]

  \caption{Universal Construction -- code at process $p$}
  \labalg{runiv} 
  \footnotesize

  \begin{algorithmic}[1]

    \begin{svariables}
      \VarCustom{$R$} \Comment{A racing on consensus objects}
    \end{svariables}
    
    \begin{lvariables}
      \VarNoType{$C$}{$R.\enter{}$}
      \VarNoType{s}{\stateInit}
    \end{lvariables}

    \begin{procedure}{$\invoke{\op}$}
      \While{\true} \labline{runiv:1}
        \State $d \assign C.d$ \labline{runiv:2}
        \If{$d \neq \bot$} \labline{runiv:3}
          \State $s \assign d[1]$ \labline{runiv:4}
          \State $C \assign R.\enter{}$ \labline{runiv:5} 
        \Else 
          \State $(s',v) \assign \tau(s,\op)$ \labline{runiv:6} 
          \If{$s \equals s'$} \labline{runiv:7}
            \Return $v$ \labline{runiv:8}
          \EndIf
          \State $d \assign C.\propose{(p,s')}$ \labline{runiv:9}
          \If{$d[0] = p$} \labline{runiv:10}
            \Return $v$ \labline{runiv:11}
          \EndIf
        \EndIf
      \EndWhile
    \end{procedure}
    
  \end{algorithmic}

\end{algorithm}

A universal construction is a general mechanism to obtain linearizable shared objects from sequential ones.
A sequential object is specified by some serial data type that defines its possible states as well as its access operations.
Formally, a serial data type is an automaton defined by:%
\begin{inparaenum}[]
\item a set of states ($\stateSet$),
\item an initial state $\stateInit$ in $\stateSet$,
\item a set of operations ($\operationSet$),
\item a set of response values $\valSet$, and
\item a transition relation $\tau :  \stateSet \times \operationSet \rightarrow \stateSet \times \valSet$.
\end{inparaenum}
Hereafter, and without lack of generality, 
we shall assume that every operation $\op$ is \emph{total}, i.e., $\stateSet \times \{\op\}$ is in the domain of $\tau$.

\refalg{runiv} depicts our obstruction-free linearizable universal construction.
The algorithm uses a single shared variable $R$, which is a racing on obstruction-free consensus objects.
When a process $p$ invokes an operation via $\invoke{op}$, 
$p$ first checks the decision of the latest consensus object it entered (\refline{runiv:3}).
If a decision was taken, $p$ updates its local variable $s$ with the new state of the object.
Then, $p$ enters the next consensus (\reflines{runiv:4}{runiv:5}).
Once $p$ reaches the last consensus that was decided, variable $s$ stores a state of the object that is older than the time at which $p$ invoked \op.
At this point, process $p$ executes tentatively the operation on $s$ and stores the result in the pair $(s',v)$.
When $s$ equals $s'$, the invocation does not change the result of the object and $p$ can immediately returns $v$.
Otherwise, $p$ proposes the pair $(p,s')$ to change the state of the object to $s'$.
In case of success, process $p$ returns the response $v$ (\reflinestwo{runiv:10}{runiv:11}).

\textbf{Time complexity.}
As pointed out previously, the case we consider to be the most frequent is the contention-free case, 
that is when multiple processes access the object but interleavings do not occur.
In the worth case, a process freshly calling \invoke{} in a contention-free execution 
first retrieves the largest decided consensus, then it enters the next consensus and decides.
From our previous time complexity analysis of \refalgtwo{uracing}{rconsensus} and the lower bound result of \citet{obstructionFree00}, 
the contention-free time complexity of \refalg{runiv} is optimal and belongs to $\Theta(k)$.

\section{Managing Memory Usage}
\labsection{recycle}

Every time the state of the object implemented by the universal construction changes, \refalg{runiv} accesses a new consensus instance.
This implies that the number of consensus instances is not bounded and may rapidly exhaust available memory.
In this section, we present a novel recycling technique that addresses this problem.
To that end, we first introduce several definitions that capture the notion of recycled objects.
Then, we depict a mechanism to recycle the objects used in \refalg{runiv}.

\subsection{Preliminary Notions}
\labsection{recycle:preliminary}

Intuitively, every time an object is reused, it should behave according to its specification.
We formalize this idea in the definitions that follow.

\begin{definition}[\emph{Round} \& \emph{Decomposition}]
  \labdef{roundsanddecomposition}
  Given some history $h$, a round $r$ of $h$ is a sub-history of $h$ such that every invocation complete in $h$ is complete in $r$.
  A \emph{decomposition} of $h$ is an ordered set of rounds $\{r_{1}. \ldots.r_{m \geq 1}\}$ satisfying $h=r_{1}.\ldots.r_{m}$.
\end{definition}

\begin{definition}[\emph{Recycled Object}]
  \labdef{recycled}
  Consider a history $h$ of some object $o$.
  We say that $o$ is a \emph{recycled} object of type $\mathfrak{T}$ during $h$,
  when there exists a decomposition of $h$ such that every round $r$ in this decomposition is a correct history for an object of type $\mathfrak{T}$.
\end{definition}

In order to illustrate these definitions, let us consider two processes $p$ and $q$, and a shared object $o$ exporting an operation $\op$.
We can decompose history
$h_1=\invocation{p,1}{\op}.\invocation{q,1}{\op}.\response{q,1}{\op}{u}.\response{p,1}{\op}{v}.\invocation{p,2}{\op}$
in rounds $r_1=\invocation{p,1}{\op}.\invocation{q,1}{\op}.\response{q,1}{\op}{u}.\response{p,1}{\op}{v}$ and $r_2=\invocation{p,2}{\op}$.
However, if we consider that $\op=\proposeAction$ and $u \neq v$, 
there is no decomposition of $h_1$ for which $o$ is a recycled consensus object.

The usual approach to recycle an object is to reset all its fields once the processes have stopped accessing it,
that is once all the operations pending in a round have completed.
The universal construction of \citet{waitfree} implements this idea 
by provisioning for each process $O(n^2)$ cells, each cell storing the state of the implemented object.
An array of $O(n)$ bits associated to each cell indicates when it can be reset by its owner.

Since the participants to the universal construction are unknown in our context, we cannot employ the previous approach.
Instead, we propose to recycle the objects used in \refalg{runiv} by signing each modification with the round at which it occurs.
An operation that updates such an object will be oblivious to modifications made in prior rounds.
If now the operation is in late, that is when a new round has started before it returns,
the operation will observe the object in a state consistent with one of the rounds to which it is concurrent.
We develop this idea in the next section, then apply it to \refalg{runiv}.

\subsection{Recycling Objects}
\labsection{recycle:recycling}

As a starter, let us remind the definition of a decidable object.
This category of objects contains consensus, but also the splitter and the grafarius objects we described in \refsection{base}.

\begin{definition}[Decidable Object]
  A decidable object $o$ is a shared object whose state contains a \emph{decision} register $d$ 
  taking its value in some set $\mathcal{V}$, the domain of $o$, union $\bot \notin \mathcal{V}$, and which initially equals $\bot$.
  The object is said \emph{decided} when $d \in \mathcal{V}$ holds.
  For every operation $\op$ of $o$, once $o$ is decided, 
  there exists some deterministic function $f$ of $d$ such that $f(d)$ is a valid response value for $\op$.
\end{definition}

As an example of the previous definition, let us consider a grafarius object.
We observe that when the decision register $d$ does not equal $\bot$,
the pair $(\flagAdopt,d)$ is a sound response for the call \adoptcommitOperation.

The first step of our recycling mechanism consists in recycling the MWMR registers that form the basic building blocks of our algorithms.
We detail it below.
\begin{construction}
  Let $(\timestampSet,<)$ be a set of timestamps totally ordered by some relation $<$
  and containing a smallest element $0 \in \timestampSet$.
  For every register $x$ having some initial state $\stateInit$, we initialize $x$ to $(0,\stateInit)$.
  Then, consider some timestamp $t$.
  When a value $v$ is written to $x$, we write $(t,v)$ to $x$.
  Now, upon reading from $x$,
  the value returned is the value $u$ in case $x$ contains $(t',u)$ with $t \leq t'$,  and $\stateInit$ otherwise.
\end{construction}

\noindent
In a second step, we extend this technique to decidable objects as follows.
\begin{construction}
  For some decidable object $o$, a call to $\recycle{o,t}$ returns a copy of $o$ such that upon a call to an operation $\op$ of $o$,%
  \begin{inparaenum}[(i)]
  \item if the object is decided then we return $f(d)$, and otherwise
  \item $\op$ is executed but read and write operations 
    on the shared registers that implement $o$ are replaced by the above construction using timestamp $t$.
  \end{inparaenum}
\end{construction}
For some decidable object $o$, 
we shall write $\recycle{o}$ the object obtained by proxying every call to the operations of $o$ 
by corresponding calls to $\recycle{o,t}$ for some timestamp $t$.
\refprop{recycle} proves that, provided the timestamps are appropriate, $\recycle{o}$ implements a recycled object of the same type as $o$.

\begin{proposition}
  \labprop{recycle}
  Consider a decidable object $o$ of type $\mathfrak{T}$ and some history $h$ of $\recycle{o}$ during which the following invariant holds:
  \begin{compactitem}
  \item[(P1)] %
    For any pair of operations $\op$ and $\op'$, executed respectively on $\recycle{o,t}$ and $\recycle{o,t'}$ in $h$, 
    if $\op'$ does not precede $\op$ in $h$ and $t'<t$ holds, 
    there exists an operation on $\recycle{o,t'}$ that precedes $\op'$ in $h$ and writes to the decision register $d$ of $o$.
  \end{compactitem}
  Then, $\recycle{o}$ implements a recycled object of type $\mathfrak{T}$ during history $h$.
\end{proposition}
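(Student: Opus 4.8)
The plan is to exhibit an explicit decomposition of $h$ and to check that each of its rounds is a correct history for type $\mathfrak{T}$, which is exactly what the definition of a recycled object asks for. The guiding idea is that the timestamp $t$ carried by an operation on $\recycle{o,t}$ names the round it belongs to, and that the two constructions together with invariant~(P1) make the rounds behave like pristine, independent instances of $o$. I would separate two complementary effects: the recycled-register construction shields each round from all \emph{earlier} rounds, while (P1) together with decidability neutralises every operation that lingers into a \emph{later} round.

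First I would record the two key facts. By the recycled-register construction, a read performed by an operation on $\recycle{o,t}$ returns $\stateInit$ whenever the register's last write carried a timestamp strictly below $t$; hence an operation never observes any state produced by a round with a smaller timestamp -- this is \emph{isolation from below}. Second, I would use (P1) in the following form: if an operation $\op$ on $\recycle{o,t}$ ever reads a value written by an operation on $\recycle{o,t'}$ with $t'>t$, then that write precedes $\op$, so $\op$ does not precede it, and (P1), applied with the high timestamp $t'$ and the low timestamp $t$, forces round $t$ to have been decided before $\op$ is invoked. Since $o$ is decidable and $\recycle{o,t}$ tests whether the object is already decided before doing anything else, such an $\op$ immediately returns $f(d)$ and performs no register update. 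Thus every operation that could observe a later round short-circuits and is inert.

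Combining these, I would call an operation on $\recycle{o,t}$ \emph{fresh} when it reads $d=\bot$ at its start, i.e. round $t$ is not yet decided from its point of view; all remaining operations are the inert short-circuits just described. Applying (P1) to a fresh round-$t$ operation $\op$ and to any operation $A$ on $\recycle{o,t'}$ with $t'>t$ shows that $\op$ must precede $A$, for otherwise round $t$ would already be decided, contradicting freshness. Hence the \emph{fresh phases} are totally ordered in real time by their timestamp: every fresh round-$t$ operation completes before any operation of a strictly larger timestamp begins. This ordered chain of fresh phases is the backbone of the decomposition. Writing $t_1<\dots<t_m$ for the timestamps carrying fresh operations, I would place the cut between two consecutive rounds at the decision write of $t_{i+1}$, and let round $r_i$ gather round-$t_i$'s fresh phase together with the inert operations whose read of $d$ lands in the interval where $d$ holds $t_i$'s decision. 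Isolation from below and the backbone ordering guarantee that the fresh operations of $r_i$ run exactly as on a brand-new copy of $o$, so they constitute a correct $\mathfrak{T}$-history; each inert operation of $r_i$ returns $f(d)$ for that same decision, which is a valid response for a decided object of type $\mathfrak{T}$.

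The hard part is not the per-round correctness but turning the real-time ordering of fresh phases into a genuine \emph{contiguous} decomposition in the sense of the notion of round, namely one in which no \emph{completed} operation straddles a cut while pending ones may appear only partially. I expect the delicate bookkeeping to be twofold: (i) locating each cut precisely at a decision-write event and checking, via (P1), that any operation alive across that instant is either still pending in $h$ -- hence permitted to appear incompletely -- or is an inert short-circuit that can be assigned wholesale to the round whose decision it read; and (ii) verifying that the value an inert operation retrieves from $d$ is exactly the decision of the round in which its real-time position places it, so that its response $f(d)$ is consistent with that round rather than with a neighbouring one. I would organise the whole argument as an induction on the number $m$ of distinct fresh timestamps, peeling off $r_1$ (round $t_1$'s fresh phase together with the inert operations reading $t_1$'s decision) and applying the induction hypothesis to the remainder, which still satisfies (P1) with one fewer timestamp.
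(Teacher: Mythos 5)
Your building blocks are the right ones, and they mirror the paper's: your ``isolation from below'' is exactly what Construction~1 provides, your P1 argument that an overtaken operation must find its round already decided and short-circuit corresponds to \refclaim{recycle:1} and \refclaim{recycle:2}, and your per-round argument (fresh operations run as on a pristine copy of $o$; short-circuiting operations return $f(d)$, which is legal once the round is decided) is the content of \refclaim{recycle:4}. The gap is the overall plan: you set out to ``exhibit an explicit decomposition of $h$'' itself, and no such decomposition exists in general. By the definition in \refsection{recycle:preliminary}, a decomposition must satisfy $h=r_1.\ldots.r_m$, so rounds are contiguous segments of $h$ and no operation complete in $h$ may straddle a cut. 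Consider the following P1-compliant scenario: the round-$t_i$ modifier completes; fresh round-$t_{i+1}$ operations are invoked and start executing; before the round-$t_{i+1}$ decision is written, an inert operation with timestamp $t_i$ is invoked, reads the round-$t_i$ decision, returns $f(d)$ and completes. By its return value this operation can only be placed in $r_i$ (placing it in $r_{i+1}$ is unsound: for consensus, round $r_{i+1}$ may decide a different value, so agreement would fail inside $r_{i+1}$), yet all of its events occur after the invocations of completed round-$t_{i+1}$ operations, so any contiguous cut putting it in $r_i$ forces those round-$t_{i+1}$ operations to straddle the cut. Your proposed fix --- assigning such an operation ``wholesale'' to the round whose decision it read --- reorders the events of $h$, so what you build is no longer a decomposition of $h$.

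That reordering is precisely the step that must be made explicit and justified: the proposition asserts that $\recycle{o}$ \emph{implements} a recycled object, i.e., is a \emph{linearizable} implementation of one. The paper therefore first builds a sequential history $\lin$ (Construction~3: modifiers ordered by their last write to $d$, observers by their read of $d$), proves that this order extends the real-time order of $h$ (\refclaim{recycle:3}) --- which is what legitimizes moving your straddling inert operations, since they are concurrent with, not preceded by, the higher-round operations --- and only then decomposes $\lin$ into rounds. Your induction on the number of fresh timestamps does not address this, as the straddling scenario already occurs between $r_1$ and $r_2$. A secondary, fixable imprecision: your claim that an overtaken operation ``immediately returns $f(d)$'' assumes the low-timestamp decision is still visible to its read, i.e., that no later write to $d$ carries a smaller timestamp; ruling that out needs the induction of \refclaim{recycle:2} (modifiers with smaller timestamps precede modifiers with larger ones), not just a single application of P1.
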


\begin{algorithm}[t]

  \caption{Universal Construction -- code at process $p$}
  \labalg{buniv} 
  \footnotesize

  \begin{algorithmic}[1]

    \begin{svariables}
      \VarCustom{$L$} \Comment{A map from $\procSet$ to $\naturalSet$, initially \emptySet} \labline{buniv:var:1}
    \end{svariables}
    
    \begin{lvariables}
      \VarCustom{$\ifunct$} \Comment{A function from $\naturalSet$ to consensus objects}
      \VarNoType{$s$}{$\stateInit$}
      \VarNoType{$\lastlap$}{$0$}
      \VarNoType{$\timestamp$}{$0$}
      \VarNoType{$C$}{$\enter{}$}
    \end{lvariables}

    \vspace{-1.5em} 
    \begin{multicols}{2}%
    \begin{procedure}{\invoke{\op}}
      \While{\true} \labline{buniv:8}
        \State $d \assign C.d$ \labline{buniv:9}
        \If{$d \neq \bot$} \labline{buniv:10}
          \State $(s, \lastlap, \timestamp) \assign (d[1],d[2],d[3])$ \labline{buniv:11} 
          \State $C \assign \enter{}$ \labline{buniv:12} 
        \Else 
          \State $(s',v) \assign \tau(s,\op)$ \labline{buniv:13} 
          \If{$s \equals s'$} \labline{buniv:14}
            \Return $v$ \labline{buniv:15}
          \EndIf
          \State $d \assign C.\propose{(p,s',\free,\timestamp+1)}$ \labline{buniv:16}
          \If{$d[0]=p$} \labline{buniv:17}
            \Return $v$ \labline{buniv:18}
          \EndIf
        \EndIf
      \EndWhile
    \end{procedure}
    \columnbreak

    \begin{function}{$\free$}
        \State $S \assign \codomainOf{L}$ \labline{buniv:1} 
        \State \textbf{let} $(\Gamma, \gamma) = (\maxOf{S},\minOf{S})$ \labline{buniv:2} 
        \If{$\gamma > 0$} \labline{buniv:3} 
          \Return $\gamma-1$ \labline{buniv:4} 
        \EndIf
        \Return $\Gamma+1$ \labline{buniv:5} 
    \end{function}
    \vspace{1em}

    \begin{function}{$\enter{}$}
      \State $L[p] \assign \lastlap$ \labline{buniv:6} 
      \Return $\recycle{\ifunctOf{\lastlap},\timestamp}$ \labline{buniv:7}
    \end{function}
  \end{multicols}
  \vspace{-\baselineskip}
  \end{algorithmic}

\end{algorithm}

\subsection{Application}
\labsection{recycle:univ}

\refalg{buniv} depicts our second obstruction-free universal construction.
In comparison to \refalg{runiv}, we introduce two modifications:%
\begin{inparaenum}[(i)]
\item processes now compete to decide which consensus will store the next state of the object, and
\item consensus objects are recycled using the mechanism we presented in Construction~2.
\end{inparaenum}

With more details, \refalg{buniv} works as follows.
We implement a racing on consensus objects with variables $L$ and $F$.
When an operation changes the state of the implemented object,  
the calling process proposes to consensus the new state $s'$ together with the index of the consensus object 
that will be used next and its associated timestamp (\refline{buniv:16}).
The index is determined by a call to the function \free.
This function retrieves the codomain of $L$, 
and computes the smallest consensus index that is not currently accessed by a process (\reflines{buniv:1}{buniv:5}). 
In case all objects between $0$ and $\Gamma$ are busy, where $\Gamma$ is the greatest index accessed so far, the index $\Gamma+1$ is returned.

\refalg{buniv} recycles the consensus objects in the codomain of $\ifunct$ using the timestamping schema we introduced in \refsection{recycle:recycling}.
During an execution, for every object \recycle{o} with $o \in \codomainOf{\ifunct}$, the algorithm maintains the invariant P1 of \refprop{recycle}.
This ensures that accesses to variables $L$ and $F$ implement a racing on consensus objects, reducing \refalg{buniv} to \refalg{runiv}.

\textbf{Time \& Space Complexity.}
The contention-free time complexity of \refalg{buniv} is the same as \refalg{runiv}, i.e., it belongs to $\Theta(k)$ in \asm.
From the code of function \free, \refalg{buniv} employs at most $k+1$ consensus objects.
In a distributed system, a quorum system can implement a collect object by emulating $O(k)$ shared registers.
It results that in such a model the contention-free time complexity of \refalg{buniv} measured in message delay is $O(1)$, 
and that its space complexity belongs to $O(k^2)$.

\section{Empirical Assessment}
\labsection{empirical}

To assess the practicability of our approach, we evaluate in this section a prototype implementation of \refalg{buniv}.
This implementation is built on top of the Apache Cassandra distributed data store~\cite{cassandra} which 
provides a distributed shared memory using consistent hashing and quorums of configurable sizes.
In what follows, we describe the internals of our implementation then detail its performance in comparison 
to the Apache Zookeeper coordination service~\cite{zookeeper}.
For the sake of reproducibility, the source code of our implementation as well as the scripts 
we run during the experiments are publicly available~\cite{pssolib}.

\subsection{Implementation Details}
\labsection{empirical:proto}

\paragraph{Cassandra} offers a data model close to the classical relational model at core of the database systems. 
The smallest data unit in Cassandra is a column, a tuple that contains a name, a value and a timestamp.
Columns are grouped by rows, and a column family contains a set of rows.
Each row is indexed by a key, and stored at a quorum of replicas (following a consistent hashing strategy).
A client can read a whole row and write a column.
The consistency of such operations is tunable in Cassandra.
When the cluster running Cassandra is synchronized and both read and write operations operate on quorums, Cassandra provides an atomic snapshot model.
This storage system also supports eventually consistent operations.
When this consistency level is employed, a write operation accesses a quorum of replicas, while a read occurs at a single replica.
Cassandra reconciles replicas via a timestamp-based mechanism in the background.

\vspace{-0.5em}
\paragraph{Prototype Implementation}
Our implementation uses the Python programming language 
and it consists of the different shared objects we detailed in the previous sections (splitter, grafarius, consensus, and universal construction).
The conciseness of Python allows the whole implementation to use around 1,000 lines of code.
Our implementation closely follows the pseudo-code of the algorithms.
Each object corresponds to a row in a column family, and is named after the type of the object.
When an object relies on lower-level abstractions, e.g., consensus employs multiple grafarius objects,
the objects' keys at the low-level are named after the key at the higher one, e.g., \emph{consensus:12:grafarius:3}.
By changing the consistency of the decision register $d$ in \refalg{rconsensus}, we can tune the consistency of the universal abstraction.
When $d$ is eventually consistent, the universal abstraction is sequentially consistent for trivial operations; otherwise it is linearizable.
In Zookeeper, updates are linearizable while read operations are sequentially consistent.
For that reason, when we compare the performance of our implementation to Zookeeper during the experiments, 
we use the sequentially consistent variation of our algorithm.

\subsection{Evaluation}
\labsection{empirical:eval}

We conducted our experiments on a cluster of virtualized 8-core Xeon 2.5~Ghz machines running Gentoo Linux, and connected with a virtualized 1~Gbps switched network.
Network performance, as measured by ping and netperf, are 0.3~ms for a round-trip and a bandwidth of 117MB/s.
Each machine is equipped with a virtual hard-drive whose read/write (uncached) performance, as measured with \texttt{hdparm} and \texttt{dd}, are 246/200~MB/s.
A \emph{server machine} runs either Cassandra or Zookeeper.
A \emph{client machine} emulates multiple clients accessing concurrently the shared objects.
During an experiment, a client executes $10^4$ accesses to one or more objects.
We used 1 to 20 clients machines, emulating 1 to 100 clients each, and 3 to 12 server machines.
In all our experiments, the client machines were not a bottleneck.

\begin{wrapfigure}{r}{0.4\textwidth}
  \vspace{-2.9em}
  \centering
  \scriptsize
  \input{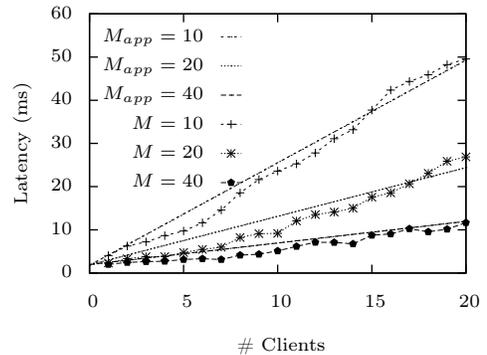} 
  \vspace{-2.3em}
  \caption{\labfigure{cas}Compare-And-Swap}
  \vspace{-0.1em}
\end{wrapfigure}

\vspace{-0.5em}
\paragraph{Compare-and-swap}
We first evaluate in \reffigure{cas} the performance of our implementation 
when clients execute compare-and-swap operations, and the system is composed of 3 server machines.
Recall that a compare-and-swap object exposes a single operation: $\compareandswap{u,v}$.
This operation ensures that if the old value of the object equals $u$, it is replaced by $v$.
In such a case, the operation returns \true; otherwise it returns \false.
In \reffigure{cas}, we plot the latency to execute a compare-and-swap operation
as a function of the number of clients and the arguments of the operations.
The initial state of the compare-and-swap object is $0$.
Each client executes in closed-loop an operation \compareandswap{k,l}
where $k$ and $l$ are taken uniformly at random from the interval $\llbracket 0,M \llbracket$ and $M$ is some maximum value. 

When the size of the interval $\llbracket 0,M \llbracket$ shrinks, 
each \compareandswap{} operation has more chance to success in transforming the state of the object, and thus contention increases.
Consequently, as observed in \reffigure{cas}, performance degrades.
The contention between clients occur mainly on the splitter objects that form the building blocks of \refalg{grafarius}.
We briefly analyze how contention is related to performance below.

A compare-and-swap operation $\compareandswap{u,v}$ is successful when the state of the object is $u$; otherwise it fails.
Let us note $\rho$ the ratio of successful operations, that is $1/M$, 
and $\lambda_s$ (respectively $\lambda_f$) the latency to execute solo a successful (resp. failed) operation.
In \reffigure{cas}, the light lines ($M_{approx}$) plot for each value of $M$ the curve $\lambda_f  (1-\rho) + \lambda_s  \rho  n$.
This is a reasonable approximation where the term $\lambda_s \rho n$ follows Little's law~\cite{Allen} and translates the convoy effect~\cite{Blasgen} on successful operations.

\begin{wrapfigure}{r}{0.4\textwidth}
  \vspace{-3.5em}
  \centering
  \scriptsize
  \input{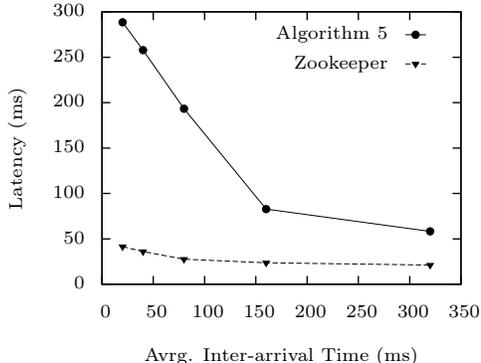}
  \vspace{-1.9em}
  \caption{\labfigure{cs}Critical Section}
  \vspace{-0.5em}
\end{wrapfigure}

\vspace{-0.5em}
\paragraph{Critical section}
In \reffigure{cs}, we compare the performance of our implementation to Zookeeper, when clients access a critical section (CS).
Such an object is not in line with the non-blocking approach, but it is commonly used in distributed applications.
We implemented the CS on top of our universal construction using a back-off mechanism.
For Zookeeper, we employed the recipe described in~\cite{zookeeper}.
\reffigure{cs} presents the average time a client takes to enter then leave the CS,
and we vary the inter-arrival time of clients in the critical section according to a Poisson distribution.

We observe in \reffigure{cs} that when the inter-arrival time is high, and thus little contention occurs,
a client accesses the CS with Zookeeper in 20~ms.
For \refalg{buniv}, it takes 60~ms, but the performance degrades quickly when clients access more frequently the CS.
This comes from the fact that%
\begin{inparaenum}[(i)]
\item we implemented a spinlock and thus clients are constantly accessing the system, and
\item as pointed out previously, when clients are competing on splitter objects, the performance of our algorithm degrades.
\end{inparaenum}

\begin{wrapfigure}{r}{0.4\textwidth}
  \vspace{-2.5em}
  \centering
  \scriptsize
  \input{results/cas/different/different.tex}
  \vspace{-2.7em}
  \caption{\labfigure{scal}Scalability}
  \vspace{-1em}
\end{wrapfigure}

\vspace{-0.5em}
\paragraph{Scalability}
In our last experiment, we assess the scalability of our approach.
To that end, we compute the maximal throughput of our prototype implementation 
when clients access different objects, precisely \compareandswap{} for $M=10$. 
The amount of available server machines varies from 3 to 12 servers;
in all cases we implement a register with the help of a quorum of 3 servers.
We compare our results to an instance of 3 Zookeeper machines.
Zookeeper does not implement natively a compare-and-swap operation.
We devised the following implementation relying on the versioning mechanism exposed to the clients by Zookeeper.
When a client executes \compareandswap{u,v} 
it first retrieves the value $w$ and the attached version $k$ of the znode identifying the object.
In case $w=u$, the client writes $v$ with version $k+1$.
If this write fails due to a concurrent writing, the client re-execute \compareandswap{u,v}.
Note that in our experiments since a single client accesses each object, 
it never retries an operation and this implementation is therefore optimal.

\reffigure{scal} depicts the results obtained for 3 to 12 servers.
With 3 servers, our system delivers 18.4~KOps/s and ZooKeeper 12.6~KOps/s.
This gap is explained by the bottleneck nature of the leader in ZooKeeper which serializes all updates.
Our prototype performance increases to 33K when 9 servers are used, and to 40K with 12 servers.
In this last case, our system is 3.2 times faster than a Zookeeper instance on 3 machines.

\section{Conclusion}
\labsection{conclusion}

This paper presents a novel algorithmic solution to implement a distributed universal construction when participants are unknown.
Contrary to previous works, which mostly focus on state machine replication, our approach employs solely a distributed asynchronous shared memory,
the logic of consistent operations being delegated to the client side.
Hence, and as exemplified by our prototype, we can implement it in a client library that runs on top of an off-the-shelf distributed shared memory.
To obtain this result, we introduce two new shared abstractions: a grafarius and a racing, which we believe are of interest on their own.
We also present a mechanism to recycle the base objects at core of our construction.


\newpage

{
  \bibliographystyle{abbrvnat}
  \bibliography{bib/msaeida,bib/mshapiro,bib/nschiper,bib/psutra}
}  

\newpage
\appendix
\section{Correctness of Algorithms \ref{alg:grafarius} to \ref{alg:runiv}}
\labappendix{base}

The two theorems that follow prove the correctness of our implementations of a grafarius and a racing objects.

\begin{theorem}
  \labtheo{grafarius}
  \refalg{grafarius} implements a wait-free grafarius.
\end{theorem}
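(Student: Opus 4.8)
The plan is to establish wait-freedom together with the four safety properties one at a time, leaning throughout on two monotonicity observations about the shared state. First, the collision flag $c$ is monotone: it starts at $\false$ and is only ever written $\true$ (at \refline{grafarius:2}). Second, the register $d$ is monotone away from its sentinel: once it leaves $\bot$ at \refline{grafarius:5} or \refline{grafarius:7} it never returns to $\bot$, since every value written to $d$ is a proposal in $\mathcal{V}$ and $\bot \notin \mathcal{V}$. Wait-freedom is then immediate, as the procedure has no loop, the splitter call is wait-free, and register accesses are wait-free in the \asm model; and because the object is one-shot, each process runs the body once. Validity is equally direct: a process returns either its own argument $u$ (at \refline{grafarius:6}, \refline{grafarius:9}, or \refline{grafarius:10}) or the current content of $d$ (at \refline{grafarius:4}), and $d$ only ever holds a value previously written as a proposal at \refline{grafarius:5} or \refline{grafarius:7}.

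For solo convergence I would invoke the splitter guarantee that a lone caller wins: if $p$ returns before any other process invokes, then $p$ is the sole caller of $s.\ssplit$, so it wins and proceeds to \refline{grafarius:7}; since no other process ever executes \refline{grafarius:2}, the flag $c$ is still $\false$ when $p$ reads it at \refline{grafarius:8}, and $p$ commits $u$ at \refline{grafarius:10}.

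The crux of coherence is to show that a committed value is the \emph{only} value ever stored in $d$. Suppose $p$ commits $v=u$ at \refline{grafarius:10}; then $p$ won the splitter and read $c=\false$ at \refline{grafarius:8}. By monotonicity of $c$, every execution of \refline{grafarius:2} is linearized after $p$'s read of $c$, hence after $p$'s write $d \assign u$ at \refline{grafarius:7}. Consequently any process reaching the test at \refline{grafarius:3} does so strictly after $d$ became non-$\bot$, so it reads $d \neq \bot$, never executes \refline{grafarius:5}, and adopts the content of $d$ at \refline{grafarius:4}. Since the winner is unique, $d$ is written only by $p$ and its sole value is $u$; therefore every process adopts $u$, or in $p$'s case commits it.

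The main obstacle is the continuation property, where I must both exclude $p$ from returning its freshly proposed argument and certify that whatever it adopts was proposed before its own invocation. The first step uses the splitter's ``late caller loses'' guarantee: if some process returned before $p$ invokes, then $p$ calls $s.\ssplit$ after that process returned from the splitter, so $p$ loses and enters the branch \reflines{grafarius:2}{grafarius:6}. The second step shows $d \neq \bot$ throughout this branch: the earlier-returning process must, before returning, either have written $d$ itself or have read $d \neq \bot$ at \refline{grafarius:3}, so $d$ is already non-$\bot$ at $p$'s invocation and stays so by monotonicity; hence $p$ reads $d \neq \bot$ and adopts it at \refline{grafarius:4} instead of returning $u$. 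Finally I argue the writer whose value $p$ reads invoked before $p$: any process invoking no earlier than $p$ calls the splitter after the earlier process returned, so it loses and could write $d$ only at \refline{grafarius:5}, which requires reading $d=\bot$, impossible once $d$ is non-$\bot$. Thus the adopted value was proposed strictly before $p$'s invocation. The delicate point throughout is interleaving the real-time order of invocations and returns (needed for the splitter's solo and late-caller guarantees) with the linearization order of register accesses (needed for the monotonicity arguments), which I would make precise by appealing to the linearizability of the MWMR registers.
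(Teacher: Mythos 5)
Your proof is correct and follows essentially the same route as the paper's: wait-freedom from the wait-free base objects, validity and solo convergence read off the code, coherence by arguing that the unique splitter winner's read of $c=\false$ forces every execution of \refline{grafarius:2} (and hence every reach of \refline{grafarius:3}) after the write at \refline{grafarius:7}, and continuation from the splitter's late-caller-loses guarantee together with the persistence of $d \neq \bot$. Your treatment of continuation is in fact more explicit than the paper's (which merely states its two key observations), notably in ruling out that the adopted value could have been proposed by a process invoking after $p$.
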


\begin{proof}
  \refalg{grafarius} employs only wait-free shared objects.
  Hence, to the light of its pseudo-code, it conserves this liveness property.
  Validity and solo convergence are both immediate.
  To prove the continuation property, 
  we observe that when a process $p$ returns from a call to \adoptcommitOperation, 
  \begin{inparaenum}[(i)]
  \item $d \neq \bot$ holds, and 
  \item if a process $q$ invokes \adoptcommitOperation after $p$ returned, it loses the splitter.
  \end{inparaenum}
  It remains to show that \refalg{grafarius} also ensures the coherence property.
  To that end, consider that in some execution $\run$ of \refalg{grafarius}, a process $p$ commits a value $u$.
  Note $t_0$ the time in $\run$ at which $p$ executes \refline{grafarius:8}.
  Since $p$ commits $u$, $c$ always equals $\false$ before $t_0$.
  This implies that $p$ wins the splitter $s$ in $\run$.
  From the properties of a splitter object, 
  we conclude that every process, except $p$, that completes its invocation in $\run$ must execute \refline{grafarius:2}.
  This observation and the fact that $c=\false$ at $t_0$ tell us that no process executes \refline{grafarius:2} before $t_0$.
  Then, observe that $p$ executes \refline{grafarius:7} at some time $t_1<t_0$.
  As a consequence, $d=u$ after $t_0$, and every process other than $p$ that completes 
  its invocation in $\run$ executes \refline{grafarius:4}.
  The coherence property holds in $\run$.
\end{proof}

\begin{theorem}
  \labtheo{uracing}
  \refalg{uracing} implements a wait-free linearizable racing.
\end{theorem}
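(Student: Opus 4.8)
Wait-freedom is immediate: the body of \enter{} contains no loop and performs only a single store (\refline{uracing:1}) and a single collect (\refline{uracing:2}) on the shared map $L$, both of which the adaptive collect object executes in a bounded number of steps, while \reflines{uracing:3}{uracing:7} are purely local. So every invocation returns within a bounded number of steps. For linearizability I would first record two monotonicity facts. Writing $\ell^p_k$ for the value of \lastlap held by $p$ after its $k$-th call (with $\ell^p_0=0$), the update at \reflines{uracing:4}{uracing:6} gives $\ell^p_k>\ell^p_{k-1}$: at \refline{uracing:3} the maximum $m$ already reflects $p$'s own write $\ell^p_{k-1}$ from \refline{uracing:1}, so $m\ge\ell^p_{k-1}$, and $p$ either increments to $m+1$ (when $\ell^p_{k-1}=m$) or jumps to $m>\ell^p_{k-1}$. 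Hence each process writes a strictly increasing sequence into $L[p]$ and $\maxOf{\codomainOf{L}}$ is non-decreasing in time. A third observation is decisive: whenever an index $\ell$ belongs to $\codomainOf{L}$, some process $q$ wrote $\ell$ at \refline{uracing:1} of a call it started only after having entered $\ifunctOf{\ell}$ in its previous call; thus $q$ is in the act of advancing past $\ifunctOf{\ell}$ and will leave it at that call's \refline{uracing:2}.

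I would take $\ll_h$ to be the order induced by the indices, $\ifunctOf{i}\ll_h\ifunctOf{j}\iff i<j$ (using that the indexing $\ifunct$ into the countably infinite $\lapSet$ may be assumed injective), and linearize each \enter{} at its collect (\refline{uracing:2}), breaking ties by the linearization order of the underlying collect object. To establish the \emph{Ordering} property I distinguish the two ways a call of $p$ fixes its lap $\ifunctOf{\ell^p_k}$. In the \emph{increment} case $\ell^p_k=\ell^p_{k-1}+1$, so the lap $p$ just left, $\ifunctOf{\ell^p_{k-1}}$, is by the index order the immediate predecessor of $\ifunctOf{\ell^p_k}$, which is clause (ii). In the \emph{catch-up} case $\ell^p_k=m>\ell^p_{k-1}$; by the third observation some process $q$ had entered $\ifunctOf{m}$ and begun advancing before $p$'s collect, and I would argue that $q$'s advancing call, which leaves $\ifunctOf{m}$, can be ordered before $p$'s enter of $\ifunctOf{m}$, giving clause (i).

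The catch-up case is where I expect the real work to lie. When $p$ jumps to $m$ from some $\ell^p_{k-1}<m-1$, clause (ii) genuinely fails, since the greatest entered index below $m$ may be $m-1\neq\ell^p_{k-1}$; so I must rely on clause (i) and exhibit a leave of $\ifunctOf{m}$ preceding $p$'s enter. The delicate point is that the process $q$ whose write made $m$ visible may still sit between its \refline{uracing:1} and its \refline{uracing:2}, hence has not yet left $\ifunctOf{m}$ in real time. I would then invoke the freedom of linearization to place $q$'s overlapping advancing call before $p$'s call, and—this is the main obstacle—verify that performing these local reorderings simultaneously for all such pairs yields a single total order that is still consistent with real-time precedence and with the collect's own linearization. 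Finally, I would dispatch the base case of the first call that pioneers $\ifunctOf{1}$, where $p$ has left no lap and index $0$ is never entered, by treating the initial value $0$ as a virtual bottom lap so that clause (ii) applies there as well.
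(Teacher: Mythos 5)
Your overall plan matches the paper's proof in structure: wait-freedom from the loop-free code, the index order as $\ll_h$, the increment case discharged by clause (ii), and the catch-up case discharged by clause (i) via the process $q$ whose write of $m$ was read. But there is a genuine gap exactly where you flag it: you never construct the linearization order, and that construction is the entire content of the catch-up case. Linearizing at the collect (\refline{uracing:2}) is provably the wrong choice: in your own scenario ($q$ writes $m$ at \refline{uracing:1}, $p$ collects and jumps to $m$, $q$ collects only later), the collect order places $p$'s enter of $\ifunctOf{m}$ before $q$'s leave of it, and neither clause of the Ordering property then holds for $p$. So the ``local reorderings'' are not optional patches but the heart of the matter, and you give no argument that they can all be performed simultaneously while preserving real-time order. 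Writing ``I would invoke the freedom of linearization'' leaves the crux assumed rather than established.

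The paper closes this by not using linearization points at all: it defines $<_{\lin}$ directly from the order of the operations on the shared map $L$, with the tie-breaking rule that for two concurrent calls, if $p$ reads the value of $L[q]$ written by $q$ at \refline{uracing:1}, then $q$'s call precedes $p$'s call in $\lin$. With this definition, the precedence needed in the catch-up case holds by construction, and pairwise consistency with real time is immediate: since $q$'s write precedes $p$'s collect, $q$'s call cannot begin after $p$'s call returns, so placing $q$'s call earlier never contradicts real-time precedence. This pairwise observation is the fact missing from your write-up, and it is what you should prove to turn your plan into a proof; your strictly-increasing-$\lastlap$ lemma helps here, since a writer-before-reader constraint from $q$ to $p$ only arises when the index written by $q$ strictly exceeds the one written by $p$, which rules out symmetric constraints between the same pair of calls. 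One point where you are more careful than the paper: the base case of the very first laps, which you handle with a virtual bottom lap of index $0$, is silently glossed over in the paper's argument.
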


\begin{proof}
  From the pseudo-code of \refalg{uracing}, we conclude that this algorithm is wait-free.
  Then, consider a complete history $h$ produced by the following mapping:%
  \footnote{
    All our implementations are at least obstruction-free.
    Thus, when proving linearizability, we can simply consider a complete history.
  }
  when $p$ invokes \enter{} in \refalg{uracing}, it calls \enter{p,l} in $h$ with $l=\ifunctOf{\lastlap}$ the response value of the call, 
  and when the call of $p$ returns, we append the corresponding response to $h$.

  Let $(\lin,<_{\lin})$ be the linearization of $h$ induced by the order in which the operations on variable $L$ occurs in $h$.
  In more detail, 
  if \enter{p,l} and \enter{q,l'} are concurrent 
  then we choose the order $\enter{p,l} <_{\lin} \enter{q,l'}$
  if $p$ does not read the value of $L[q]$ written by $q$ (\refline{uracing:1}); 
  otherwise, we choose the converse order.

  Fix $\ll_{\lin}$ as the order between the entered laps in $\lin$ induced by the indexing function $\ifunct$.
  Assume a process $p$ enters some lap $l=\ifunctOf{\lastlap}$ in $\lin$.
  We observe that $p$ executes either \refline{uracing:5} or \refline{uracing:6}.
  In the first case, this implies that $p$ has just left $\ifunctOf{\lastlap-1}$ which 
  is the greatest lap smaller than $l$ in which a process has entered during $\lin$.
  In the second case, we observe that necessarily a process left $l$.
  and moreover that this event occurs in $\lin$ before $p$ enters $l$.
  In both cases, the ordering property holds.
\end{proof}

Below, we prove a key proposition characterizing algorithms that employ a racing on decidable objects.
(The notion of decidable object is recalled in \refsection{recycle:preliminary}.)

\begin{proposition}
  \labprop{racing}
  Consider an algorithm $\alg$ that accesses a racing on decidable objects.
  Suppose that in $\alg$, when some process $p$ enters a new object, the last object left by $p$ is decided.
  Then, during every execution $\run$ of $\alg$, at the time $p$ enters a new object $o$, 
  all the objects $o' \ll_{\run} o$ are decided.
\end{proposition}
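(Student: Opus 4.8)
The plan is to argue by well-founded induction along $\ll_{\run}$, after first reducing the claim to the \emph{first} process that enters each object. The key preliminary observation I would establish is that a decidable object, once decided, stays decided: its decision register $d$ holds $\bot$ only at initialization and is thereafter overwritten solely with values of the domain $\mathcal{V}$, so the predicate ``$o$ is decided'' is monotone in time. Consequently it suffices to prove that, for every entered object $o$, all objects $o' \ll_{\run} o$ are decided at the \emph{earliest} time $t_0$ at which any process enters $o$; decidedness then persists to every later entry of $o$, which is exactly the statement quantified over all entries.

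Next I would set up the induction on $\ll_{\run}$. By the ordering property of a racing, every entered object that is not $\ll_{\run}$-minimal admits an immediate $\ll_{\run}$-predecessor, and only finitely many objects are entered before any finite time, so $\ll_{\run}$ is well-founded enough to induct on; the base case is the $\ll_{\run}$-least entered object, which has no predecessor, so the statement is vacuous. For the inductive step I fix a non-minimal $o$ and let $p$ be the first process to enter it, at time $t_0$. The crucial move is to apply the ordering property to this entry and rule out its first disjunct: case~(i) would require some process to have \emph{left} $o$ --- hence to have entered $o$ --- strictly before $t_0$, contradicting the minimality of $t_0$. Therefore case~(ii) holds, and the last object left by $p$ before $t_0$ is exactly the immediate $\ll_{\run}$-predecessor $o^{-}$ of $o$.

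From here the hypothesis on $\alg$ and the induction hypothesis combine directly. Since $p$ is entering the new object $o$, the assumption that the last object left by $p$ is decided shows $o^{-}$ is decided at $t_0$. Applying the induction hypothesis to $o^{-}$ shows that, at the first entry of $o^{-}$ (which is no later than $p$'s own entry of $o^{-}$, and thus strictly before $t_0$), every $o'' \ll_{\run} o^{-}$ is decided; by the monotonicity observation these remain decided at $t_0$. As $\{\, o' : o' \ll_{\run} o \,\} = \{o^{-}\} \cup \{\, o'' : o'' \ll_{\run} o^{-} \,\}$, all objects below $o$ are decided at $t_0$, which closes the induction.

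I expect the main obstacle to be the reduction to the first entry: it is precisely this restriction that excludes disjunct~(i) of the ordering property, leaving the immediate-predecessor disjunct~(ii) as the only relevant case, so that the ``last-left-is-decided'' hypothesis of $\alg$ can be chained down the $\ll_{\run}$-order. The secondary care points, which I would treat as small lemmas, are establishing the monotonicity of decidedness for the concrete decidable objects at hand and justifying that $\ll_{\run}$ restricted to the entered objects is well-founded enough to carry the induction.
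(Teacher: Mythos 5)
Your proof is correct and follows essentially the same route as the paper's: induction along $\ll_{\run}$, reduction to the first process entering $o$ (which rules out disjunct~(i) of the ordering property), then combining the ``last-left-is-decided'' hypothesis on the immediate predecessor with the induction hypothesis. The only difference is that you make explicit two points the paper leaves implicit --- the monotonicity of decidedness and the well-foundedness of $\ll_{\run}$ --- which is a welcome tightening rather than a departure.
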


\begin{proof}
  By induction.
  The property holds clearly for the first object in the order $\ll_{\run}$.
  Then, consider that some process $p$ enters an object $o$ and assume by induction 
  that all the objects prior to $o$ for the order $\ll_{\run}$ are decided.
  By property (i) of the ordering property of a racing object, 
  we can consider without lack of generality that $p$ is the first process to enter $o$ in $\run$.
  Then, by property (ii), process $p$ left previously the greatest object smaller than $l$ for the order $\ll_{\run}$; 
  name this object $o'$.
  Applying the induction hypothesis, we know that for every object $o'' \ll_{\run} o'$, $o''$ is decided at that time.
  Moreover, before $p$ enters $o$, $\alg$ ensures that $o'$ is decided.
  Hence, the property holds for $o$.
\end{proof}

Based on the above proposition, we prove next that 
\refalg{rconsensus} and \refalg{runiv} implement respectively a consensus and a universal abstraction.

\begin{theorem}
  \labtheo{rconsensus}
  \refalg{rconsensus} implements an obstruction-free linearizable consensus.
\end{theorem}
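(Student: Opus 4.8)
The plan is to establish the three required properties — obstruction-freedom, validity, and agreement — treating the grafarius and the racing as black boxes through \reftheo{grafarius} and \reftheo{uracing}, and to concentrate the real work on agreement by combining the \emph{coherence} property of a grafarius with the ordering guarantee of a racing. A preliminary observation I will use throughout is that \refalg{rconsensus} meets the hypothesis of \refprop{racing}: before a process leaves a grafarius $o$ and enters the next one, it has completed a call to \adoptcommitOperation on $o$ at \refline{rconsensus:4}, and once any such call completes the decision register of $o$ is non-$\bot$, i.e. $o$ is decided. Hence, by \refprop{racing}, whenever a process enters a grafarius $o$, every grafarius $o' \ll_{\run} o$ is already decided.

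Obstruction-freedom is the easy part. Since the racing and every grafarius are wait-free, a process never blocks inside a single iteration of the loop. If $p$ eventually runs solo, then because the other processes have taken only finitely many steps they have entered only finitely many laps; as $p$ keeps entering fresh laps it eventually reaches a grafarius that no other process ever accesses. By \emph{solo convergence} $p$ commits its value there, writes it to $d$ at \refline{rconsensus:6}, and returns $d$ on the next iteration. Validity is equally direct: the returned value is always the content of $d$, which is set only to a \emph{committed} grafarius value; by the code of \refalg{grafarius} a commit returns the committer's own input, and by grafarius \emph{validity} an adopted value is always some input, so a routine induction along the loop shows that the value carried by any process is one of the values originally supplied to \proposeAction.

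For agreement I will show that all committed values coincide, which suffices since every return is a read of $d$. The engine is the invariant, proved by induction along $\ll_{\run}$ starting from a grafarius in which a commit occurs: if a grafarius $o$ decides a value $v$, then every process leaves $o$ carrying $v$ and the immediate $\ll_{\run}$-successor $o^+$ of $o$ again decides $v$. The key step is to argue that $o^+$ decides $v$. I will first note that a process which invokes \adoptcommitOperation on $o^+$ while $o^+$ is still undecided cannot have entered via case (i) of the \emph{ordering} property (that case requires some process to have already left $o^+$, hence to have already decided it); so it entered via case (ii), meaning its last left lap is $o$, and by the induction hypothesis together with grafarius \emph{coherence} it carries exactly $v$. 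Thus every input that can influence the decision of $o^+$ equals $v$, and grafarius \emph{validity} forces the decision of $o^+$ to be $v$ as well. A process that arrives at $o^+$ after it is decided may carry a stale value, but since $o^+$ is a decidable object its \adoptcommitOperation then simply returns the decision $v$ (and, by the splitter's guarantee, such a latecomer loses the splitter and cannot overwrite the register), so it too leaves with $v$. Feeding this through the induction gives that every grafarius above a committing one that is decided again decides the committed value; together with \emph{coherence} inside a single grafarius, any two commits — hence any two values ever written to $d$ — are equal.

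The main obstacle is precisely the treatment of stale, slow processes in the successor step: the racing lets a process skip laps, so a process may reach $o^+$ carrying a value drawn from a grafarius far below $o$ in $\ll_{\run}$, and a naive argument would conclude that such a stale value could corrupt the decision of $o^+$. The crux is to show this cannot happen, by separating the processes whose \adoptcommitOperation actually determines the register of $o^+$ (which, as argued, must have entered via case (ii) and therefore carry $v$) from the latecomers (which find $o^+$ already decided and are neutralized by decidability and by the splitter losing its race once someone has returned). Pinning down this case analysis cleanly, and checking that ``undecided at the moment of invocation'' lines up with ``entered via case (ii),'' is where the care is needed.
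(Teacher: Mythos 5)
Your proposal is correct and its skeleton matches the paper's: obstruction-freedom from solo convergence, validity from grafarius validity plus an induction along the loop, and agreement by an induction along $\ll_{\run}$ starting at the grafarius where a commit occurs, with the case analysis driven by the ordering property of the racing and with coherence showing that processes arriving from the immediate predecessor propose $v$. The genuine difference is how the two arguments neutralize latecomers, i.e., processes that reach a grafarius only after somebody already left it. The paper stays entirely at the level of the four abstract grafarius properties and dispatches this case by ``a short induction on the continuation property''; it never opens \refalg{grafarius}. You never invoke continuation at all: you descend into the implementation, observing that a latecomer must lose the splitter, must then read a non-$\bot$ decision register, and therefore returns that register's content without writing it, so the register of $o^+$ can only be written by processes coming from the predecessor and hence only ever holds $v$. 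Each route buys something. The paper's is modular: \reftheo{rconsensus} is implicitly reused when the grafarius objects are replaced by their recycled versions in \refsection{recycle}, and an argument depending only on the abstract properties transfers for free, whereas yours, as written, certifies only the composition of \refalg{rconsensus} with \refalg{grafarius}. Conversely, your register-level argument is tighter exactly at the delicate point: continuation only promises that a late process adopts \emph{some} value proposed before its invocation, which on its face does not exclude adopting another latecomer's stale proposal, so your explicit proof that latecomers can neither win the splitter nor write $d$ closes that case cleanly (and it also repairs your own slightly loose appeal to validity ``forcing'' the decision of $o^+$, since stale latecomer proposals are proposals too). Finally, your preliminary use of \refprop{racing} is sound but not needed here; the paper reserves it for \reftheo{runiv}.
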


\begin{proof}
  First of all, we observe that a process which starts executing solo necessarily commits the value 
  it proposes into the next grafarius it enters alone.
  This is ensured by the solo convergence property.
  As a consequence, \refalg{rconsensus} is obstruction-free.
  Besides, the validity of consensus follows from the validity property of a grafarius.
  
  It remains to prove that the agreement property is also satisfied.
  To that end, consider a complete history $h$ produced by an execution $\run$ of \refalg{rconsensus} in which some process $p$ returns a value $v$.
  We observe that a grafarius $o$ returned $(\flagCommit,v)$ before $p$ executes \refline{rconsensus:1}.
  Suppose then that some process $q$ adopts, or commits, a value $u$ by accessing the smallest object $o'$ higher than $o$ for the order $\ll_{\run}$.
  By the ordering property of a racing, when $q$ enters $o'$ either%
  \begin{inparaenum}[(i)]
  \item process $q$ just left $o$, and by the coherence property of a grafarius, it must propose $v$ to $o'$, or
  \item a process $q'$ left $o'$ before $q$ enters it, and by a short induction on the continuation property of a grafarius,
    $q$ must return the value $v$ when it accesses $o'$.
  \end{inparaenum}
  Hence, we have $u=v$.
  
  Next, let us consider that two processes $p$ and $q$ return respectively $u$ and $v$ during $h$.
  Name $o$ and $o'$ the two grafarius accessed to decide these values.
  (If the call of either $p$ or $q$ returns at \refline{rconsensus:1}, the situation boils down to this case.)
  If $o=o'$, the coherence property of a grafarius tells us that $u=v$.
  If now $o \neq o'$, we can assume without lack of generality that $o \ll_{\run} o'$ holds and by induction we also obtain that $u=v$.
  Hence, \refalg{rconsensus} maintains the agreement property of consensus during the history $h$.
\end{proof}

\begin{theorem}
  \labtheo{runiv}
  \refalg{runiv} implements an obstruction-free linearizable universal construction.
\end{theorem}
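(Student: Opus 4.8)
### Proof strategy for Theorem \ref{theo:runiv}

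The plan is to establish the two standard liveness and safety obligations for a universal construction—obstruction-freedom and linearizability—by leveraging the machinery already built up for \refalg{rconsensus}. The key structural observation is that \refalg{runiv} uses a racing $R$ on obstruction-free consensus objects, and that consensus is itself a decidable object (its state contains the decision register $d$). This means \refprop{racing} applies directly: whenever a process $p$ enters a new consensus object $C$ at \refline{runiv:5}, the previous consensus it left is decided, so at the time $p$ enters some object $o$, every object $o' \ll_{\run} o$ is already decided. I would state this as the first lemma of the proof, since it is the backbone for everything that follows.

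First I would argue \emph{obstruction-freedom}. A process running solo either finds $d \neq \bot$ and advances to the next consensus (\reflines{runiv:3}{runiv:5}), or computes $(s',v) = \tau(s,\op)$ and either returns immediately when $s = s'$ (\refline{runiv:8}) or proposes $(p,s')$ to the current consensus (\refline{runiv:9}). By the obstruction-freedom of the underlying consensus (\reftheo{rconsensus}), a solo process eventually decides, and because it is the only proposer its own proposal wins, so $d[0] = p$ holds at \refline{runiv:10} and it returns. The only subtlety is to bound the number of loop iterations: a solo process can traverse at most the chain of already-decided consensus objects before reaching the frontier, which is finite in any finite prefix of the execution, and thereafter it decides in one more round.

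Next I would prove \emph{linearizability}. The plan is to construct a linearization point for each completed \invoke{\op} and to totally order operations by the sequence of consensus decisions. Using \refprop{racing}, the decided consensus objects form a chain under $\ll_{\run}$, and each decided value $d = (q,s')$ records the process $q$ whose operation transformed the state and the resulting state $s'$. I would linearize the operations that actually change the state (those returning via \refline{runiv:11}) in the order of the consensus objects that carried their proposals, placing each such operation at the moment its proposal is committed. Read-only operations returning via \refline{runiv:8} (where $s = s'$) are inserted immediately after the last state-changing operation they observed, i.e.\ after the decision whose state they read into $s$. I would then verify that this order is consistent with real-time precedence—if one invocation completes before another begins, the later one reads a state at least as recent because it re-enters the racing and scans forward to the current frontier—and that it respects the sequential specification, since each committed state $s'$ equals $\tau$ applied to the previously committed state, so the chain of states $\stateInit, s_1, s_2, \ldots$ is exactly a legal sequential execution of the serial data type.

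The main obstacle I anticipate is the real-time ordering argument for the interleaving of read-only operations with state-changing ones, together with showing that a process's local variable $s$ never reflects a state more recent than its linearization point. The delicate case is a process that executes $\tau(s,\op)$ at \refline{runiv:6} on a stale $s$, then loses the race at \refline{runiv:9} because another process committed a different state first; here I must argue that the process loops, re-reads the fresher decision at \refline{runiv:2}, updates $s$, and retries, so that it never returns a response computed from a state that was already superseded at the time it returns. Establishing this \emph{freshness invariant}—that whenever \invoke{\op} returns, the value $v$ returned equals $\tau$ of the state committed immediately before $\op$'s linearization point—is the crux, and it rests squarely on the agreement property of consensus (\reftheo{rconsensus}) ensuring all processes observe the same committed state for each object in the racing chain.
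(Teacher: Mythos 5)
Your proposal is correct and follows essentially the same route as the paper's proof: the paper likewise partitions operations into \emph{regular} ones (returning at \refline{runiv:11}) and \emph{trivial} ones (returning at \refline{runiv:8}), orders them along the chain of consensus objects induced by the racing, invokes \refprop{racing} to rule out real-time inversions, and checks legality against the sequence of decided states. The one substantive divergence is where you place the read-only operations: you insert a trivial operation immediately after the regular operation whose decision it read into $s$, i.e., \emph{before} the regular operation that decides its associated (still undecided) consensus, whereas the paper's construction orders trivial operations \emph{after} the regular one having the same associated consensus. Your placement is the sound one. A trivial operation computes $\tau$ on the state decided by the \emph{predecessor} of its associated consensus, so it must be linearized before the regular operation deciding that consensus, both for legality and for real-time order --- indeed a trivial operation associated with $c$ can complete before the regular operation that later decides $c$ even begins, a schedule the paper's case analysis declares impossible but which the algorithm readily admits. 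In effect, the paper's legality argument tacitly assumes your placement while its stated ordering rule says the opposite; your version of the construction removes that off-by-one tension. Two minor points to tighten: among trivial operations observing the same decision you must also retain their real-time order, and your closing ``freshness invariant'' is stated too strongly for trivial operations --- their response may legitimately be computed from a state that has been superseded by the time they return; what matters is only that the state was current at their linearization point, which your placement already guarantees.
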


\begin{proof}
  Variable $R$ is a (wait-free) racing on obstruction-free consensus objects.
  Thus, at the light of its pseudo-code, \refalg{runiv} is obstruction-free.

  We prove now that the implementation is linearizable.
  Consider a complete history $h$ produced by some execution $\run$ of \refalg{runiv}, 
  and let $\mathit{op}$ be some operation in $h$ executed by a process $p$.
  Operation $\mathit{op}$ is \emph{trivial} (respectively \emph{regular}) when it returns at \refline{runiv:8} (resp. \refline{runiv:11}).
  The value of variable $C$ on process $p$ at the time the operation returns is the consensus \emph{associated} to $\mathit{op}$.

  Note $\ll_{\run}$ the order induced by the racing object $R$ on the consensus objects associated to at least one operation.
  Let $\lin$ be the history produced by ordering the operations in $h$ according to their associated consensus such that%
  \begin{inparaenum}[(i)]
  \item trivial operations are ordered after the regular ones having the same associated consensus, and
  \item we keep the order between trivial operations having the same associated consensus.
  \end{inparaenum}
  We prove below that $\lin$ is a linearization of $h$.

  First, we show that $<_{h} \subseteq <_{\lin}$.
  Let $\op$ and $\op'$ be two operations of $h$ executed by respectively $p$ and $q$.
  Consider that $\op <_{h} \op'$ holds, and name $c$ and $c'$ their respective associated consensus.
  \begin{inparaenum}
  \item[(Case $c = c'$.)]
    When the two operations have the same associated consensus, 
    we observe that by the agreement property of consensus they cannot be both regular (\refline{runiv:9}).
    Since the consensus $c$ must be decided for $\op$ to execute, the case $\op$ trivial and $\op'$ regular is impossible 
    Hence, we have $\op$ regular or trivial, and $\op'$ trivial.
    By the properties (i) and (ii) of our construction of the history $\lin$, we conclude that $\op <_{\lin} \op'$ holds.
  \item[(Case $c \ll_{\run} c'$.)]
    By construction, we have $\op <_{\lin} \op'$.
  \item [(Case $c' \ll_{\run} c$.)]
    Operation $\op$ is associated to $c$.
    As a consequence, there exists a time $t$ at which $c.d$ equals $\bot$ when the test at \refline{runiv:3} is executed in $\op$.
    Similarly, we note $t'$ the time at which $c'.d$ equals $\bot$ in $\op'$.
    Since $\op$ precedes $\op'$ in $h$, it must be the case that $t<t'$.
    But $c' \ll_{\run} c$ implies by \refprop{racing} that at time $t'$ consensus $c$ is decided; a contradiction.
  \end{inparaenum}
  
  It remains to prove that $\lin$ is a legal sequence.
  To that end, consider an operation $\op$ by a process $p$ in $\lin$.
  Name $v$ its response value, 
  $c$ its associated consensus,
  and $\op'$ the first regular operation that precedes $\op$ in $\lin$.
  In case $\op$ is trivial, by property (i) of our construction, $\op'$ is the regular operation associated to the same consensus as $\op$. 
  Hence, this operation sets the object to a state $s$ such that $\tau(s,\op)=(s,v)$ (see \reflinesthree{runiv:4}{runiv:6}{runiv:9}).
  Assume now that $\op$ is regular.
  When $p$ enters $c$ at \refline{runiv:5}, by the ordering property of a racing object, we can consider two cases:%
  \begin{inparaenum}[(i)]
  \item Process $p$ has left the greatest lap smaller than $l$ for the order $\ll_{\run}$.
    Then, variable $s$ contains the state of the object after $\op'$ and all the operations
    between $\op'$ and $\op$ are trivial.
    Thus, the response $v$ is correct.
  \item The consensus $c$ was left by some process before $p$ enters it.
    In that case, $c$ is decided at that time; a contradiction to the fact that $\op$ is regular.
  \end{inparaenum}
\end{proof}

\section{Proof of \refprop{recycle}}
\labappendix{recycle}

In this section, we show that when invariant P1 holds in some complete history $h$, 
$\recycle{o}$ is a linearizable implementation of a recycled object of type $\mathfrak{T}$.
To prove this statement, we construct from $h$ a history $\lin$ decomposable in rounds, 
each round of $\lin$ being a correct history for some object of type $\mathfrak{T}$.

Our construction of $\lin$ relies on the following distinction between the operations appearing in $h$:
operations that write to the decision register of $o$ are called \emph{modifiers},
whereas operations that read a value in the decision register $d$ of $o$ and immediately return $f(d)$ are named \emph{observers}.

\begin{claim}
  \labclaim{recycle:1}
  For every observer $\op$, there exists a unique modifier $\op'$ such that%
  \begin{inparaenum}[(i)]
  \item $\op$ observes the decision written by $\op'$, 
  \item $\op'$ is either concurrent or prior to $\op$, and 
  \item there is no modifier $\op''$ such that $\op' <_h \op'' <_h op$.
  \end{inparaenum}
  We shall say in the following that $\op'$ is the modifier \emph{associated} to $\op$.
\end{claim}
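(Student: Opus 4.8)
The claim concerns the classification of operations on $\recycle{o}$ into modifiers and observers, and asserts that every observer can be matched to a unique modifier satisfying three conditions: the observer reads exactly the decision that modifier wrote, that modifier is not strictly after the observer, and no intervening modifier separates them in the real-time order. Let me think about what I need to establish here.

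So we have a history $h$ of $\recycle{o}$. The decision register $d$ of the underlying decidable object $o$ starts at $\bot$. An observer $\op$ is an operation that reads a value $v \neq \bot$ from $d$ and returns $f(d)$. A modifier writes to $d$. The claim wants: for each observer, a unique associated modifier with properties (i)-(iii).

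Let me think about the proof approach.

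**Existence.** An observer $\op$ reads some value $v = d$ with $v \in \mathcal{V}$ (i.e., $v \neq \bot$). Since $d$ is initialized to $\bot$ and only modifiers write to $d$, the value $v$ that $\op$ reads must have been written by some modifier. Because registers are atomic (linearizable MWMR registers in the ASM model), the read $\op$ performs on $d$ returns a value written by some specific write operation — that's the modifier $\op'$. This gives condition (i). Condition (ii): the write by $\op'$ must be linearized before the read by $\op$; since $\op'$'s write on $d$ must linearize before $\op$'s read, $\op'$ cannot strictly follow $\op$ in real-time order — so $\op'$ is either prior to or concurrent with $\op$.

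**Condition (iii) — the main obstacle.** This is the subtle part. I need to argue there's no modifier $\op''$ with $\op' <_h \op'' <_h \op$ in real-time. Suppose for contradiction such an $\op''$ exists. Since $\op'' <_h \op$ means $\op''$ completes before $\op$ begins, and $\op''$ writes to $d$, the write by $\op''$ must linearize before $\op$'s read of $d$. But then by atomicity of $d$, the read $\op$ performs would return either the value written by $\op''$ or by some even-later write — it could not return the earlier value written by $\op'$, since $\op'$'s write is overwritten by $\op''$'s write before $\op$ reads. This contradicts (i).

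Here I must be careful: there are potentially many modifiers writing to $d$, and $\recycle{o,t}$ may write $(t,v)$ with the recycling timestamps. Whether the claim treats $d$ as the raw register or the recycled register matters. I think the cleanest reading is to treat $d$ as a single atomic register whose value-over-time is determined by the linearization of its read/write operations; then condition (iii) follows from the atomicity of $d$ exactly as above, because the most recent write preceding $\op$'s read in the linearization is what $\op$ reads, and any modifier fully preceding $\op$ in real time linearizes before that read.

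**Uniqueness.** Uniqueness of $\op'$ follows from atomicity: a single atomic read returns the value of exactly one write — the last write preceding it in the register's linearization. If two modifiers $\op_1', \op_2'$ both satisfied (i), they would both be "the" write whose value $\op$ reads, forcing $\op_1' = \op_2'$. I would phrase this by noting that the write read by an atomic read is uniquely determined.

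Overall plan: first fix the atomicity semantics of $d$ and pin down, for observer $\op$, the unique write $\op'$ whose value $\op$'s read returns (existence + uniqueness + condition (i)); then derive (ii) from the write–read linearization order; then prove (iii) by contradiction using the fact that any fully-preceding modifier $\op''$ would have linearized after $\op'$ and before $\op$, overwriting $\op'$'s value. The delicate point I expect to need care with is whether $d$ should be read as the raw MWMR register or through the timestamped recycling construction, and confirming that the atomicity argument still goes through in the recycled setting; but since the decision register semantics are inherited directly and writes/reads on it remain atomic, the argument is robust.
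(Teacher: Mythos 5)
Your proof is correct and takes essentially the same approach as the paper's: both arguments rest on the atomicity of the decision register $d$, which pins down the unique write (hence the unique modifier) whose value the observer's read returns, and both derive conditions (ii) and (iii) from the linearization order of reads and writes on $d$. The only differences are cosmetic — the paper additionally records, via its Constructions 1 and 2, that the value read is a timestamped pair $(t' \geq t, \cdot)$, while leaving the overwriting argument for (iii) implicit, which you spell out explicitly.
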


\begin{proof}
  Consider an operation $\op$ that executes on $\recycle{o,t}$.
  If $\op$ is an observer, then according to Construction~2, the decision register $d$ contains a value different than $\bot$.
  From Construction~1, we deduce that $\op$ reads a value $(t' \geq t,\msgAny)$ from $d$.
  Since the register $d$ is atomic, a unique operation $\op'$ writes this value to $d$.
  This operation is necessarily concurrent, or prior to $\op$, and moreover there must be no modifier $\op''$ such that $\op' <_h \op'' <_h op$.  
\end{proof}

\begin{claim}
  \labclaim{recycle:2}
  For every modifier $\op$ on $\recycle{o,t}$ in $h$,
  if $\op'$ is a modifier on $\recycle{o,t'}$ and $t'<t$ holds then $\op'$ precedes $\op$ in $h$.
\end{claim}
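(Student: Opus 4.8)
The plan is to combine invariant~P1 with a structural fact about the timestamped decision register of Construction~1, and to close by an infinite-descent (minimal-counterexample) argument on timestamps. First I would record two basic facts about any modifier $\op$ on $\recycle{o,s}$. By Construction~2 and Construction~1 it writes a pair $(s,v)$ with $v\neq\bot$ into the atomic decision register $d$; and the read of $d$ through which $\op$ finds $o$ undecided (the decidedness check that makes $\op$ proceed as a modifier rather than return $f(d)$) must have returned $\bot$ at timestamp $s$. By the read rule of Construction~1, that read yields $\bot$ at timestamp $s$ exactly when the most recent write to $d$ preceding it carries a timestamp strictly smaller than $s$ (for $s=0$ this means no modifier has written yet, since the initial content is $(0,\bot)$ and modifiers never write $\bot$).

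The core is an auxiliary sub-claim: there are no two distinct modifiers $\rho,\op'$ on the same $\recycle{o,s}$ with $\rho <_h \op'$. I would prove this by contradiction, choosing a violating pair whose common timestamp $s$ is minimal. Since $\rho$ completes before $\op'$ starts, the write $(s,v_\rho)$ with $v_\rho\neq\bot$ precedes, in the linearization of the atomic register $d$, the check-read of $d$ performed by $\op'$. But $\op'$ is a modifier, so that read returns $\bot$ at timestamp $s$; hence the most recent write to $d$ before it, say performed by a modifier $\sigma$ on $\recycle{o,s''}$, satisfies $s''<s$, and this write is strictly later than $\rho$'s write (which also rules out it being the initialization). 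Consequently $\sigma$ does not precede $\rho$ in $h$. Applying invariant~P1 to the pair $(\rho,\sigma)$, with $s''<s$ and $\sigma$ not preceding $\rho$, yields a modifier $\rho'$ on $\recycle{o,s''}$ with $\rho'<_h\sigma$. The pair $(\rho',\sigma)$ then violates the sub-claim at timestamp $s''<s$, contradicting minimality; the degenerate case $s=0$ is impossible outright, because after $\rho$ writes a non-$\bot$ value any read of $d$ at timestamp $0$ returns that value, so $\op'$ could not have seen $\bot$.

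With the sub-claim in hand, the main statement is one step: supposing toward a contradiction that $\op'$ does not precede $\op$, and since $t'<t$, invariant~P1 applied to $(\op,\op')$ produces a modifier $\rho$ on $\recycle{o,t'}$ with $\rho<_h\op'$; as $\op'$ is itself a modifier on $\recycle{o,t'}$, the pair $(\rho,\op')$ contradicts the sub-claim, so $\op'$ precedes $\op$ in $h$. The hard part will be the register-level bookkeeping inside the sub-claim: using atomicity of $d$ together with the read rule of Construction~1 to argue that the read making $\op'$ a modifier forces a strictly smaller timestamp to have been written more recently than $\rho$'s write. This is exactly the fact that lets P1 drive the descent to a smaller timestamp, and everything else is routine.
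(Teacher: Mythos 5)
Your proof is correct, but it takes a genuinely different route from the paper's. The paper proves the claim directly, by induction on the position of modifiers in $h$: assuming the claim for all modifiers prior to $\op$, it supposes for contradiction that some operation concurrent with or following $\op$ writes to $d$ with a timestamp $t'<t$, picks the \emph{first} such operation $\op'$, applies P1 to obtain a modifier $\op''$ on $\recycle{o,t'}$ preceding $\op'$, and then argues---combining the induction hypothesis with the minimality of $\op'$---that every write to $d$ between $\op''$ and $\op'$ carries a timestamp at least $t'$, so that $\op'$'s decidedness check sees a decided object, contradicting its being a modifier. You instead factor out a sub-claim (two modifiers on the same $\recycle{o,s}$ are never ordered by $<_h$), prove it by descent on timestamps, and then obtain the claim from a single application of P1. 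Both arguments rest on exactly the same ingredients (P1, atomicity of $d$, and the Construction~1 read rule forcing a modifier's check-read to return $\bot$); the difference is where the induction lives. Your decomposition is arguably more modular: the sub-claim is a clean structural fact in its own right (all modifiers of a given round are pairwise concurrent), and it absorbs the bookkeeping that the paper handles rather tersely inside its induction step. The one caveat concerns your induction measure: Construction~1 only requires $(\timestampSet,<)$ to be a total order with least element $0$, not a well-founded one, so a violating pair of \emph{minimal} timestamp need not exist in general, whereas the paper's induction on history position is always well-founded. This is harmless for \refalg{buniv}, whose timestamps are naturals, and your descent can be made unconditional by changing the measure: in your own step, $\sigma$'s write to $d$ strictly precedes $\op'$'s in the linearization of $d$ (it even precedes $\op'$'s check-read), and any write has only finitely many predecessors in that linearization, so descending on that position terminates without any assumption on $\timestampSet$.
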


\begin{proof}
  We proceed by induction.
  Consider that the claim holds for all the modifiers prior to some operation $\op$ on $\recycle{o,t}$.
  Then, assume for the sake of contradiction that an operation concurrent, or following, 
  $\op$ writes to the decision register $d$ of $o$ with a timestamp $t'<t$.
  Name $\op'$ the first such operation.
  By P1, there exists an operation $\op''$ on $\recycle{o,t'}$ that precedes $\op'$ in $h$ and that writes to $d$.
  From our induction hypothesis and the choice of $\op'$, 
  every modifier between the response of $\op''$ and the invocation of $\op'$, 
  writes to $d$ with a timestamp $t'' \geq t'$.
  Hence from the code of Construction~2, $\op'$ observes that $o$ is decided when it is executed;
  a contradiction to the fact that this operation is a modifier.
\end{proof}

We detail how to build history $\lin$ in Construction~3.
Further, we show that $\lin$ is a higher-level view of $h$ in the sense of the Definition 5 in~\cite{Lamport86}.
Then, we prove the existence of a decomposition of $\lin$ for which each round is correct for the type $\mathfrak{T}$.

\begin{construction}
  Consider the invocation and response events $e$ that occur in the history $h$.
  In case $e=\invocation{p}{\recycle{o,t}.op}$, we add $\invocation{p}{\recycle{o}.op}$ to $\lin$;
  otherwise $e=\response{p}{\recycle{o,t}.op}{v}$ for some response value $v$, and we add $\response{p}{\recycle{o}.op}{v}$ to $\lin$.
  The order in which we add those events in $\lin$ is%
  \begin{inparaenum}[(i)]
  \item for modifiers, the order in which they write last to the decision register $d$ of $o$,  and 
  \item for observers, the order in which they read $d$.
  \end{inparaenum}
\end{construction}

By the fact that registers are atomic, Construction~3 implies that $\lin$ is sequential.
Moreover, since every operation reads or writes to $d$ and we order them according to their accesses to $d$, 
the following claim is immediate.

\begin{claim}
  \labclaim{recycle:3}
  For any two operations $\op$ and $\op'$ in $h$, if $\op$ precedes $\op'$ in $h$ then $\op$ precedes $\op'$ in $\lin$
\end{claim}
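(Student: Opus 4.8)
The plan is to exploit the fact that Construction~3 places every operation of $\lin$ according to a single, well-defined access to the atomic register $d$, and that this access necessarily falls inside the operation's execution interval in $h$. Since $\op <_h \op'$ means these two intervals are disjoint with the response of $\op$ preceding the invocation of $\op'$, the two accesses occur in that order in real time, and the way Construction~3 is defined then forces $\op <_{\lin} \op'$.

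First I would recall the modifier/observer dichotomy and note that every operation in $h$ touches $d$ in exactly the manner Construction~3 uses to position it: a modifier is listed by its last write to $d$, an observer by its unique read of $d$. Call this distinguished step the \emph{anchor} of the operation. The only thing to verify at this stage is that the anchor always lies between the operation's invocation and its response; this is immediate, since both the last write performed by a modifier and the read performed by an observer are steps of that very operation. Next I would use the hypothesis $\op <_h \op'$: by definition of precedence in $h$, the response of $\op$ occurs before the invocation of $\op'$, so the anchor of $\op$, being inside $\op$'s interval, occurs in real time strictly before the anchor of $\op'$, which lies inside $\op'$'s interval. Because $d$ is an atomic MWMR register its accesses are sequential (as already observed for $\lin$), and Construction~3 appends operations following precisely the real-time order of their anchors; hence $\op$ is appended before $\op'$ and $\op <_{\lin} \op'$ holds.

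I do not expect any genuine obstacle here — the paper itself flags the claim as immediate. The single point that deserves a line of care is the verification that an operation's anchor really lies inside its own interval, since it is exactly this that turns the disjointness of the intervals of two $h$-ordered operations into an ordering of their anchors; once that is granted, the sequentiality of accesses to $d$ together with the definition of Construction~3 closes the argument. It is worth remarking that this claim uses neither of the two preceding claims (associated modifiers and timestamp ordering): it rests only on Construction~3 and the atomicity of $d$, the earlier claims being reserved for the later step that each round of the decomposition is correct for type $\mathfrak{T}$.
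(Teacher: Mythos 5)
Your proof is correct and matches the paper's reasoning: the paper declares the claim immediate precisely because every operation is positioned in $\lin$ by its own access to the atomic register $d$ (last write for modifiers, read for observers), which is exactly the anchor argument you spell out. Making explicit that the anchor lies within the operation's execution interval is the right detail to flag, and it is the only detail needed.
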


\noindent
In what follows, we prove the correctness of $\lin$.

\begin{claim}
  \labclaim{recycle:4}
  There exits a decomposition of $\lin$ such that 
  in every round $r$ of this decomposition, $r$ is a correct history for an object of type $\mathfrak{T}$.
\end{claim}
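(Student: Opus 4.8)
The goal is to build a decomposition of $\lin$ whose rounds are each correct histories for type $\mathfrak{T}$. The plan is to let the modifiers (the operations that write to the decision register $d$) partition the timeline, and to group each observer with the round determined by its associated modifier from \refclaim{recycle:1}.

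First I would recall what Construction~3 and the preceding claims give me. By \refclaim{recycle:2}, the modifiers appear in $\lin$ in nondecreasing order of their timestamps: if a modifier $\op'$ on $\recycle{o,t'}$ with $t'<t$ ever conflicts with a modifier $\op$ on $\recycle{o,t}$, then $\op'$ precedes $\op$ in $h$, hence (by \refclaim{recycle:3}) in $\lin$. So the sequence of modifiers in $\lin$ is organized by timestamp, and all modifiers sharing one timestamp $t$ form a contiguous block. I would use the timestamps to cut $\lin$: a new round begins exactly when the timestamp of the modifiers strictly increases. Concretely, I define the round boundaries so that round $r_t$ consists of all modifiers with timestamp $t$ together with every observer whose associated modifier (per \refclaim{recycle:1}) carries timestamp $t$. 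I must then check this is a genuine \emph{decomposition} in the sense of \refdef{roundsanddecomposition}, i.e. that $\lin$ is the concatenation $r_1.\ldots.r_m$ with each completed invocation completed inside its round — this follows because an observer reads the value its associated modifier wrote, and by Construction~1 that read returns a value with timestamp $\geq t$, matching the observer to the correct block; and every completed operation reads or writes $d$ exactly once, so it lands wholly inside one round.

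Next I would argue each round $r_t$ is a correct history for an object of type $\mathfrak{T}$. Inside round $r_t$, by Construction~2 every shared-register access uses timestamp $t$ via Construction~1, so reads ignore anything written with a smaller timestamp (treating it as the initial state $\stateInit$) and thus the round sees exactly the writes performed at level $t$ — the object $\recycle{o,t}$ behaves precisely as a fresh object $o$ of type $\mathfrak{T}$ initialized to $\stateInit$. The modifiers in $r_t$ run the genuine code of $o$ on these registers, so their sub-history is by definition a legal $\mathfrak{T}$-history. The observers in $r_t$ each return $f(d)$ for the decision $d$ written by a modifier in the same round; since $o$ is decidable, $f(d)$ is a valid response value for the observer's operation once $o$ is decided, so appending observers (ordered after the modifier that decided, consistent with Construction~3) keeps the round a correct $\mathfrak{T}$-history.

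The main obstacle I anticipate is ruling out cross-round contamination at the register level: I must be certain that a read performed inside round $r_t$ never returns a value written by a modifier belonging to a \emph{later} round $r_{t''}$ with $t''>t$, and symmetrically that it correctly falls back to $\stateInit$ for earlier writes. The fallback for smaller timestamps is immediate from Construction~1. The forbidden case — reading a strictly larger timestamp — is exactly where invariant P1 and \refclaim{recycle:2} do the work: P1 guarantees that an operation at level $t$ cannot be overtaken by a lower-timestamp write without a prior decision already existing at that lower level, and \refclaim{recycle:2} packages this into the statement that higher-timestamp modifiers never precede lower-timestamp ones. I would therefore spend the bulk of the proof verifying that, within $r_t$, no access observes a write tagged with a timestamp $t''>t$, so that the projection of $h$'s register behavior onto round $r_t$ is faithfully the behavior of a single $\mathfrak{T}$-object, after which correctness of the round reduces to the correctness of $o$ itself.
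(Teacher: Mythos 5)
Your proposal is correct and follows essentially the same route as the paper's proof: you build the same decomposition (one round $r_t$ per timestamp, containing the modifiers on $\recycle{o,t}$ together with the observers whose associated modifier, per \refclaim{recycle:1}, lies in that round), you use \refclaim{recycle:2} and Construction~3 to order the rounds without interleaving, and you use Construction~1's timestamp filtering to argue each round behaves as a fresh object of type $\mathfrak{T}$. The only difference is one of emphasis — the paper spells out the four cross-round non-interleaving cases (modifier/modifier, modifier/observer, observer/modifier, observer/observer) that you compress into ``observers land in the correct block,'' while you devote more space to the register-level argument that no round observes writes from a later round — but both points rest on the same claims and neither constitutes a departure from the paper's argument.
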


\begin{proof}
  For every timestamp $t \in \timestampSet$ that appears in $h$, 
  we define the round $r_t$ as the sub-sequence of modifiers in $\lin$ that occur on $\recycle{o,t}$ in $h$,
  together with the observers (if any) for which one of these operations is a modifier.
  We order rounds according to their associated timestamps.
  Denoting $\{t_1,\ldots,t_m\}$ the ordered set of timestamps that appears in $h$, 
  we prove below that $\lin=r_{t_1}.\ldots.r_{t_n}$ holds.

  Consider an operation $\op$ accessing $\recycle{o,t}$, and that this operation is complete in $h$.
  According to Construction~3, either $\op$ belongs to $r_t$, 
  or by \refclaim{recycle:1}, there exists a modifier associated to $\op$ in some round $r_{t'}$, and $\op$ belongs to that round.
  We observe that in both cases, $\op$ is complete in the round it belongs to.

  Then, let us consider two operations $\op$ and $\op'$ that belong to rounds $r_{t}$ and $r_{t'>t}$.
  In each of the three cases that follow, we prove that $\op$ and $\op'$ do not interleave.
  (Case $\op$ and $\op'$ modifiers.)
  Since $t<t'$, by \refclaim{recycle:2}, $\op$ precedes $\op'$ in $h$.
  By Construction~3, this also holds in $\lin$.
  (Case $\op$ modifier and $\op'$ observer.)
  Note $\op''$ the modifier in round $r_{t'}$ associated to $\op'$.
  This modifier is executed on $\recycle{o,t'}$, whereas $\op$ is executed on $\recycle{o,t<t'}$.
  Applying \refclaim{recycle:2}, we know that $\op''$ precedes $\op$ in $h$, and thus also in $\lin$.
  By Construction~3, we deduce that $\op <_{\lin} \op'' <_{\lin} \op'$ holds.
  (Case $\op$ observer and $\op'$ modifier.)
  This case is symmetric to the previous one and thus omitted.
  (Case $\op$ and $\op'$ observers.)
  Let $\op_1$ and $\op_2$ be the modifiers respectively associated to $\op$ and $\op'$ in rounds $r_{t}$ and $r_{t'}$.
  Since $t<t'$, by \refclaim{recycle:2}, $\op_1$ precedes $\op_2$ in $h$.
  By Construction~3, we deduce that $\op_1 <_{\lin} \op <_{\lin} \op_2 <_{\lin} \op'$ holds.
  
  It remains to show that every round in the above decomposition is a correct history for an object of type $\mathfrak{T}$.
  To achieve this, let us now consider a round $r_t$ and some operation $\op$ in $r_t$.
  If $\op$ is an observer, then it should return the decision set by its associated modifier in $r_t$, and this modifier precedes $\op$.
  Otherwise, by construction all the operations in $r_t$ are executed on $\recycle{o,t}$.
  and by definition of $\lin$, all the modifiers preceding the round $r_t$ were executed on $\recycle{o,t'}$, for some timestamp $t'<t$.
  Thus, from Construction~1,  all the modifiers in $r_t$ are oblivious of the modifications that occur in previous rounds.
  As a consequence, we conclude that $r_t$ is a correct history for an object of type $\mathfrak{T}$.
\end{proof}

\section{Correctness of \refalg{buniv}}
\labappendix{buniv}

As pointed out in \refsection{recycle:univ}, the correctness of \refalg{buniv} relies on \refprop{recycle}.
Below, we state that the recycled consensus objects used in \refalg{buniv} satisfy this proposition,
then we reduce \refalg{buniv} to \refalg{runiv}.

\begin{proposition}
  \labprop{buniv:1}
  During every execution \run of \refalg{buniv}, for every natural $k$, $\recycle{\ifunctOf{k}}$ implements a recycled consensus object.
\end{proposition}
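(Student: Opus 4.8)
The claim is that for every natural $k$, the recycled consensus object $\recycle{\ifunctOf{k}}$ used in \refalg{buniv} implements a recycled consensus object. By \refprop{recycle}, it suffices to verify that invariant P1 holds for each such object during every execution $\run$. So the plan is to reduce the statement entirely to establishing P1, and then to prove P1 from the way timestamps are assigned in \refalg{buniv}.

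\textbf{Goal.} For every natural $k$, the object $\recycle{\ifunctOf{k}}$ used in \refalg{buniv} must implement a recycled consensus object. The plan is to reduce this entirely to \refprop{recycle}: once I verify that the invariant P1 holds for $\recycle{\ifunctOf{k}}$ throughout every execution $\run$, \refprop{recycle} immediately yields the claim, since consensus is the common type of every object in $\codomainOf{\ifunct}$. So the whole work is to establish P1 for an arbitrary index $k$.

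First I would pin down the timestamp bookkeeping of \refalg{buniv}. A process advances $\timestamp$ only at \refline{buniv:11}, when it reads a non-$\bot$ decision $d$ and copies $d[3]$, while a proposer stamps the next object with $\timestamp+1$ at \refline{buniv:16}. I would prove by induction on the length of the chain of decisions that timestamps are assigned consecutively: the consensus whose decision carries next-timestamp $t{+}1$ is itself accessed at timestamp $t$, so the timestamp of an access coincides with the position of the object in the logical sequence of decided consensus instances. Consequently, an operation on $\recycle{\ifunctOf{k},t}$ with $t\geq 1$ can occur only after a decision write carrying next-timestamp $t$, which in turn forces a chain of decision writes at every smaller timestamp. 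A companion fact, read off the same control flow, is that the decision-register writes on a fixed $\ifunctOf{k}$ occur in non-decreasing timestamp order: once a write at timestamp $t$ sits in the register, a later \propose{} issued at a smaller timestamp $t'$ reads the register through the timestamp-$t'$ view of Construction~1, finds a non-$\bot$ value (the stored stamp being $\geq t'$), and hence returns without writing.

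With these facts I would verify P1 directly. Take $\op$ on $\recycle{\ifunctOf{k},t}$ and $\op'$ on $\recycle{\ifunctOf{k},t'}$ with $t'<t$ and $\op'$ not preceding $\op$. For $\op$ to run at timestamp $t$, index $k$ must have been recycled at position $t$; by the code of \free (\reflines{buniv:1}{buniv:5}) this happened only when some proposer observed $k \notin \codomainOf{L}$ while computing the next index. I would then couple this with \refline{buniv:6} (a straggler performing $\op'$ keeps $L[p]=k$ until it leaves lap $k$) to show that, by the time $k$ is reused, the round at timestamp $t'$ is already decided by a timestamp-$t'$ write, and that this write precedes $\op'$: either $\op'$ has not yet claimed $k$ in $L$, so it still has to read the already-written decision, or it has claimed $k$, so \free could not have returned $k$ for the reuse demanded by $\op$ unless $\op'$ had first read a non-$\bot$ decision and left lap $k$. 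Feeding the resulting timestamp-$t'$ decision write into \refprop{recycle} closes the argument.

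The main obstacle I anticipate is the concurrent case of P1, where $\op$ and $\op'$ overlap: there I cannot chain precedences through $\op$ and must instead locate the timestamp-$t'$ decision write strictly before $\op'$ using only the \free/$L$ interaction. A secondary subtlety is circularity — the index used at each position, and hence the claim that a timestamp-$t'$ write lives on $\ifunctOf{k}$ rather than on some other object, seems to presuppose that the consensus instances already behave correctly. I would resolve this by phrasing the whole argument as an induction on the timestamp: correctness of the recycled consensus at all strictly smaller timestamps (itself a consequence of P1 restricted to those rounds) guarantees a unique decision at each earlier position, which fixes the index at position $t'$ and licenses the precedence reasoning for the current round.
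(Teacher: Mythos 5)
Your high-level plan coincides with the paper's: reduce the statement to invariant P1 via \refprop{recycle}, and establish P1 from the interaction between function \free and the map $L$, with an induction (on timestamps, or on operations as the paper does) to break the circularity about which object occupies which timestamp. The execution of the P1 step, however, has two genuine gaps. The critical one is the case $t'=t-1$, i.e., when index $k$ is reused at the very next timestamp. Your argument requires the timestamp-$t'$ decision write to occur \emph{before} the \free call that returned $k$; that is the only way your precedence chain ``write precedes \free, \free precedes the straggler's claim of $k$ at \refline{buniv:6}, the claim precedes $\op'$'' can be assembled. This is true for $t'<t-1$ (the chain of decisions passes through timestamp $t'$ before reaching the proposer), but false for $t'=t-1$: there, the \free call is made by a process $p''$ while it builds the very proposal $(\cdot,\cdot,k,t)$ that is subsequently decided at timestamp $t-1$, so the timestamp-$(t-1)$ write occurs strictly after it. The paper closes precisely this case with an observation your analysis cannot produce, because you track only the $L$-entry of the straggler executing $\op'$: in this case $p''$ is itself proposing on the object at timestamp $t-1$, which is $\ifunctOf{k}$, so $L[p'']=k$ when \free is evaluated, and \free could not have returned $k$ at all; the case is simply impossible.

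The second gap is your branch ``or it has claimed $k$ \ldots unless $\op'$ had first read a non-$\bot$ decision and left lap $k$''. Feeding ``the resulting timestamp-$t'$ decision write'' into P1 does not work: that write is one the straggler observes \emph{after} $\op'$ in its own program order (indeed $\op'$ could itself be the propose that performs it), so nothing makes it precede $\op'$, which is what P1 demands. What closes this branch is the hypothesis you state and never use: if the straggler released $k$ before the \free call, its entire visit to lap $k$, including $\op'$, finished before that call, which precedes the decision at timestamp $t-1$, hence precedes $p$'s read at \refline{buniv:11}, its \enter{}, and $\op$; so $\op'$ precedes $\op$, contradicting the assumption. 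Finally, your ``companion fact'' (decision writes on a fixed $\ifunctOf{k}$ occur in non-decreasing timestamp order) is not a consequence of Construction~1 alone: a propose at timestamp $t'$ that has already passed its read of $d$ still writes after a higher-timestamp write lands. In the paper this fact is \refclaim{recycle:2} and is derived \emph{from} P1, so invoking it while proving P1 is circular; fortunately your main case analysis does not actually depend on it.
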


\begin{proof}
  Choose an execution $\run$ of \refalg{buniv} and suppose that P1 holds in $\run$ for all objects $\recycle{\ifunctOf{k}}$, up to some operation $\op$ on $\recycle{o,t}$.
  The case $t=0$ is obvious, hence from now we assume $t>0$.
  Consider an operation $\op'$ on $\recycle{o,t'}$ such that $\op'$ does not precede $\op$ in $\run$ and $t'<t$ holds.
  Let $p$ be the process that executes $\op$ in $\run$.

  Operation $\op$ on $\recycle{o,t}$ occurs either at \refline{buniv:9} or at \refline{buniv:16}.
  Since $t>0$, process $p$ executed a call to function \enter{} at \refline{buniv:7} before, 
  and this call returned $\recycle{\ifunctOf{l},t}$, with $o=\ifunctOf{l}$.
  From the code at \refline{buniv:11}, naturals $l$ and $t$ are both contained in the decision register of some object $\recycle{\msgAny,t-1}$.
  Then because P1 applies up to $\op$, some process $p''$ invoked $op''=\propose{(\msgAny,\msgAny,l,t)}$ on $\recycle{\msgAny,t-1}$ previously.
  At this point, we may consider the two following cases.
  (Case $t'=t-1$)
  The index $l$ computed by $p''$ is the result of a call to function \free at \refline{buniv:16}.
  But from the code at \reflines{buniv:1}{buniv:5}, at that time $L[p'']$ equals $l$.
  A contradiction.
  (Case $t'<t-1$)
  We apply our induction hypothesis to $\op''$ and $\op'$.
\end{proof}  

\begin{proposition}
  \labprop{buniv:2}
  During every execution $\run$ of \refalg{buniv}, $h|\enter{}$ is a correct history for a racing on consensus objects.
\end{proposition}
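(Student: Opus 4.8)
The plan is to exhibit the strict total order $\ll_{\run}$ demanded by the ordering property and to read it off the timestamp carried by the recycled consensus objects. Being a ``correct history for a racing on consensus objects'' means two things: each entered lap behaves as a genuine consensus, and the ordering invariant holds. The first part is exactly \refprop{buniv:1}, so I would spend the effort on the ordering invariant. I identify each entered lap with a pair (index, timestamp): a call to $\enter{}$ returns $\recycle{\ifunctOf{\lastlap},\timestamp}$ (\refline{buniv:7}), so the lap entered is the recycled consensus indexed by $\lastlap$ in round $\timestamp$. I then set $o \ll_{\run} o'$ whenever the round of $o$ is strictly smaller than that of $o'$. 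The substance of the proof is to show this relation is a strict total order on the entered laps, i.e. that at most one lap is ever entered per round.

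First I would record two bookkeeping facts read straight off the pseudo-code: a process only ever proposes to, or enters, the consensus held in its variable $C$, and $C$ is either the initial lap $(0,0)$ or the lap named by the decision it last read at \reflines{buniv:9}{buniv:11}; moreover the timestamp a process carries while acting on a lap equals that lap's round, since $\timestamp$ is overwritten by $d[3]$ on entry (\refline{buniv:11}) while a proposer writes $\timestamp+1$ into the next decision (\refline{buniv:16}). I would then prove, by induction on the round $t$, that exactly one lap is entered at round $t$ and that every process acting on it carries timestamp $t$. The base case is lap $(0,0)$, as every process starts with $\lastlap=0$ and $\timestamp=0$. For the inductive step, \refprop{buniv:1} guarantees that the recycled object at the unique round-$t$ lap is a genuine consensus, so by agreement its decision is unique; since every proposer there carries timestamp $t$ and writes $t+1$ (\refline{buniv:16}), that decision names a single successor $(l',t+1)$, and any process leaving round $t$ reads exactly this decision and enters that one lap. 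As all proposals and entries stay on this chain, no off-chain lap is ever decided or entered, which closes the induction and makes $\ll_{\run}$ a strict total order.

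With the chain in hand the ordering property follows readily. I would take any process $p$ entering a lap $o$. If $o=(0,0)$ this is the minimum of $\ll_{\run}$, handled as the base case. Otherwise $p$ reached $o$ by executing \reflines{buniv:9}{buniv:12}: it read the (unique) decision of its current consensus $\bar{o}$ and then called $\enter{}$. Hence $\bar{o}$ is precisely the last lap $p$ leaves as it enters $o$, and by the timestamp relation its round is one less than that of $o$. By the uniqueness established above, $\bar{o}$ is the unique, hence the greatest, lap smaller than $o$ for $\ll_{\run}$, so clause (ii) of the ordering property holds. Notably clause (ii) holds however late $p$ is, so clause (i) is never needed.

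The step I expect to be the main obstacle is the inductive uniqueness argument, which must tie together consensus agreement (through \refprop{buniv:1}), the invariant that a proposer's carried timestamp matches its lap's round, and the claim that processes never touch an off-chain consensus. Each ingredient is individually routine, but they must be chained at the right granularity (per round), and I must be careful that the recycled object's agreement is only guaranteed within a single round --- which is exactly the round-$t$ consensus to which I apply it. The minimal-lap base case also needs the small convention that ``$\bar{o}$ is the greatest lap smaller than $o$'' is read vacuously when $o$ is the minimum of $\ll_{\run}$.
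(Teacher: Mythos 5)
Your proposal is correct and follows essentially the same route as the paper's proof: the same timestamp-induced order $\ll_{\run}$, the same appeal to \refprop{buniv:1} to show that each entered lap behaves as a consensus object, and the same predecessor argument establishing clause (ii) of the ordering property (with clause (i) never invoked). You are in fact more explicit than the paper, which leaves the one-lap-per-round induction implicit; note only that the step you single out as the main obstacle---that a process at a round-$t$ lap reads the round-$t$ decision rather than one written after the object has been recycled for a later round---is asserted without further justification in the paper's proof as well.
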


\begin{proof}
  The domain of $\enter{}$ consists of all the objects $\recycle{o,t}$ with $o=\ifunctOf{l}$ and $t \in \timestampSet$.
  For some execution $\run$ of \refalg{buniv}, we define $\ll_{\run}$ as the order induced by $<$ on $\timestampSet$, that is 
  $\recycle{o,t} \ll_{\run} \recycle{o',t'}$ holds iff $t<t'$.

  By \refprop{buniv:1}, $\recycle{o}$ is a recycled consensus object.
  Thus, we may consider a decomposition of $\run|\recycle{o}$ in rounds $\{r_1, \ldots, r_{m \geq 1}\}$, where each round $r_i$ is a correct history for consensus.
  Choose some object $\recycle{o,t}$.
  When some process $p$ executes an operation $\op$ on $\recycle{o,t}$, we may consider the round $r_i$ to which operation $\op$ belongs.
  Because $r_i$ is a correct history for a consensus object, this is also the case for the history $\run|\recycle{o,t}$.
  This proves that the domain of $\enter{}$ is a set of consensus objects.

  Then, we observe that when $p$ accesses $c_i=\recycle{\ifunctOf{l},t}$ with some operation $\op$, $p$ executed a call to $\enter{}$ previously.
  From the code at \refline{buniv:11} and the code of function $\enter{}$, the pair $(l,t)$ is the result of a decision stored in $\recycle{o',t-1}$.
  Hence, some process entered $\recycle{o',t-1}$ previously and this object is the greatest object smaller than $\recycle{o,t}$ for $\ll_{\run}$.
\end{proof}

\noindent
We can now state the main result of this section:

\begin{theorem}
  \labtheo{buniv}
  \refalg{buniv} implements an obstruction-free linearizable universal construction.
\end{theorem}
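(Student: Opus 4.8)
The plan is to obtain \reftheo{buniv} as a corollary of \reftheo{runiv}, using the two preceding propositions to certify that the recycled machinery of \refalg{buniv} realizes exactly the abstraction that \refalg{runiv} assumes. Concretely, the proof of \reftheo{runiv} never inspects \emph{how} the racing $R$ on consensus objects is implemented; it relies only on three facts: (a) each object returned by the racing is an obstruction-free consensus, so that agreement and validity hold; (b) the objects successively entered obey the ordering property of a racing; and (c) at each iteration a process proposes the state $s'$ it computed with $\tau$ and, on reading a decision, adopts the decided state. My strategy is therefore to check that \refalg{buniv} supplies (a)--(c) and then invoke the argument of \reftheo{runiv} essentially verbatim.

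First I would line up \refalg{buniv} with \refalg{runiv}. \refprop{buniv:1} furnishes fact (a): every $\recycle{\ifunctOf{k}}$ behaves as an obstruction-free consensus throughout $\run$, so the calls $C.\proposeAction$ and the reads of $C.d$ enjoy the consensus guarantees. \refprop{buniv:2} furnishes fact (b): the interplay of \enter{}, \free and the decision register makes $\run|\enter{}$ a correct history for a racing on those consensus objects, with the order $\ll_{\run}$ induced by the timestamps. Fact (c) is read off the pseudo-code: \reflines{buniv:13}{buniv:16} compute $(s',v)=\tau(s,\op)$ and propose $s'$, while \refline{buniv:11} adopts the decided state. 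The only structural departure from \refalg{runiv} is that the proposed value additionally carries the index $l$ produced by \free and the timestamp $\timestamp+1$; but these components are precisely the internal bookkeeping by which the racing of \refprop{buniv:2} selects and names the next lap, so they are immaterial to the state-transition reasoning of (c). Under this identification, the external history of \refalg{buniv} is one that \refalg{runiv} could produce, and the linearization built in \reftheo{runiv} applies unchanged.

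Obstruction-freedom then follows since a process eventually running solo enters, by \refprop{buniv:1}, an obstruction-free consensus; running alone it commits its own proposal, the ownership test $d[0]=p$ at \refline{buniv:17} succeeds, and it returns. Linearizability follows because the mapping above preserves both the real-time order and the response values, so the history of \refalg{buniv}, being a history of \refalg{runiv}, is linearizable. The main obstacle I anticipate is not a hard estimate but the bookkeeping needed to make the reduction airtight: I must verify that the lap index and timestamp committed \emph{inside} the consensus value in \refalg{buniv} are exactly those that a faithful external racing would have emitted, so that the sequence of objects entered here coincides with the sequence $R$ would yield in \refalg{runiv}. This is where \refprop{buniv:2} is indispensable, and once that identification is secured no further combinatorial work remains.
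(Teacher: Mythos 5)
Your proposal follows exactly the paper's route: the paper's own proof also uses \refprop{buniv:2} (itself resting on \refprop{buniv:1}) to reduce \refalg{buniv} to \refalg{runiv}, and then concludes by applying \reftheo{runiv}. Your write-up merely spells out in more detail the facts (consensus guarantees, racing ordering, state-transition logic) that make the reduction legitimate, which the paper leaves implicit in its two-line argument.
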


\begin{proof}
  From \refprop{buniv:2} and the code of \refalg{buniv}, \refalg{buniv} implements \refalg{runiv}.
  Applying \reftheo{runiv}, we deduce that \refalg{buniv} implements an obstruction-free linearizable universal construction.
\end{proof}

\end{document}